\numberwithin{equation}{section}
\newtheorem{thm}{Theorem}[section]
\newtheorem{lemma}[thm]{Lemma}
\newtheorem{pro}[thm]{Proposition}
\newtheorem{defi}[thm]{Definition}
\newtheorem{rem}[thm]{Remark}
\newtheorem{Exa}{Example}[section]
\newcommand{\x}{\bm{x}}
\newcommand{\y}{\bm{y}}
\newcommand{\m}{\bm{m}}
\newcommand{\bmA}{\bm{A}}
\newcommand{\bmB}{\bm{B}}
\newcommand{\bmI}{\bm{I}}
\newcommand{\bmH}{\bm{H}}
\newcommand{\bmD}{\bm{D}}
\newcommand{\bmL}{\bm{L}}
\newcommand{\F}{\mathbb{F}}
\DeclareMathOperator{\Ker}{Ker}
\begin{document}
%
\title{An Open Problem on Sparse Representations in Unions of Bases}
%
%
%

\author{
	Yi Shen, Chenyun Yu, Yuan Shen and Song Li
\thanks{
	Yi Shen, Chenyun Yu, Yuan Shen are with
	Department of Mathematics, Zhejiang Sci--Tech University, Hangzhou 310018, China
}
\thanks{Corresponding Author: Song Li is with School of Mathematical Science, Zhejiang  University, Hangzhou 310027, China}
}

\maketitle

\begin{abstract}
    We consider sparse representations of signals from redundant dictionaries which are unions of several orthonormal bases.
    The spark introduced by Donoho and Elad plays an important role in sparse representations.
    However,  numerical computations of  sparks are generally combinatorial.
    For unions of several orthonormal bases,
    two lower bounds on the spark  via the mutual coherence were established in previous work.
    We constructively prove that both of them are tight.
    Our main results give  positive answers to Gribonval and Nielsen's open problem on sparse representations in unions of orthonormal bases.
    Constructive proofs rely on a family of mutually unbiased bases which first appears in quantum information theory.
\end{abstract}

\begin{IEEEkeywords}
    Sparse Approximation,
    Spark,
    Mutual Coherence,
    Latin Squares,
    Mutually Unbiased Bases.
\end{IEEEkeywords}


\section{Introduction}
Given a redundant dictionary,
the problem of representing  vectors (also referred to as signals)  with   linear combinations of  small numbers of atoms
from the dictionary  is called  the \textit{sparse representation} \cite{Bruckstein,2012Compressed,2013foucart}.
Two fundamental concepts  defined in \cite{Bruckstein,donoho2003} are  core issues of sparse representations.
One is the  mutual coherence, the other is the spark. The matrix notation $\bm{D}$ is used for a dictionary. The \textit{spark} denoted by $\eta(\bm{D})$ is defined to be the smallest number of columns from matrix $\bm{D}$ that are linearly dependent.  The \textit{mutual coherence} denoted by $\mu(\bm{D})$ is defined to be the largest absolute normalized inner product between different columns from the matrix $\bm{D}$.
The value of spark is difficult to evaluate, the mutual coherence is used to estimate the spark in \cite{donoho2003,elad2002,Gribonval2003}.
Our interest in this paper centers around tightness of two lower bounds for the spark obtained in the previous work.

\subsection{Background}
By $\|\x\|_0$ we refer to the number of nonzero entries of  the vector $\x$.
A vector is said to be  $q$ \textit{sparse} if $\|\x\|_0\le q$.
By definition of  the spark, we see that
$$
\eta(\bm{D}) = \min_{\x\in \Ker(\bm{D}),\ \x\ne 0} \|\x\|_0
$$
where $\Ker(\bm{D})$ denotes the null space of $\bm{D}$.
The spark is useful to bound the sparsity of the uniqueness of sparse solutions.
If a  linear system $\y =\bm{D}\x$ has a solution
$\x$ obeying $\|\x\|_0 < \eta(\bm{D})/2$,
then  this solution is necessarily the sparsest
possible \cite{donoho2003}. Therefore, large values of spark are expected in applications.
To estimate the spark, lower bounds depending on the mutual coherence were obtained in \cite{donoho2003,elad2002,Gribonval2003}.

For any given arbitrary dictionary $\bm{D}$, Elad and Donoho proved in \cite{donoho2003} that
\begin{equation}\label{eq:maing}
	\eta(\bm{D})  \ge 1 + \frac{1}{\mu (\bm{D)}}.
\end{equation}
If  $\bm{D}$ is assumed to be a union of two orthonormal bases, then
a tighter estimate
\begin{equation}\label{eq:maing2}
	\eta(\bm{D})  \ge  \frac{2}{\mu (\bm{D)}}
\end{equation}
was obtained by Elad and Bruckstein in \cite{elad2002}.
If the dictionaries are Dirac/Fourier matrix pairs,
then the inequality \eqref{eq:maing2} reduces to  the support uncertainty principle obtained in  \cite{donoho2001,donoho1989}.
Extensions of the support uncertainty principle to the Fourier transform on abelian groups are referred to \cite{Feng, karhmer2008,quisquater2003,Tao}
and references therein.
Robust uncertainty principles were proved to  hold for most supports in time and frequency \cite{2006Robust,2006Quantitative}.
The case that  dictionaries are concatenations of several orthonormal bases were  studied in \cite{donoho2003,donoho2001,Gribonval2003,Gribonval2007}.
Let $q$ denote a positive integer. If  dictionaries $\bm{D}$ are  unions of $q+1$ orthonormal bases,
then  Gribonval and Nielsen proved in \cite[Lemma 3]{Gribonval2003}  that
\begin{equation}\label{eq:main}
	\eta(\bm{D})  \ge \left( 1 + \frac{1}{q} \right)  \frac{1}{\mu(\bm{D})}.	
\end{equation}

\subsection{Motivations}

The inequality \eqref{eq:main}  is a natural generalization of the inequality \eqref{eq:maing2}
from $q=1$ to $q \ge 2$.
The estimate  \eqref{eq:maing2} is tight, since there exist  Dirac/Fourier matrix pairs that meet the bound with equality \cite{donoho2003,donoho1989,elad2002}.
However,
the question whether the estimate \eqref{eq:main} for $q\ge 2$ is tight remains open.
This open question was further discussed  by Gribonval and Nielsen  in \cite{Gribonval2007}.
For general dictionary,
the tightness of  the estimate \eqref{eq:maing}  has been studied in \cite[Section III]{cai}.
For unions of several orthonormal bases, to the best of our knowledge, the tightness of  the estimate \eqref{eq:maing} is also unknown. Motivated by these open questions, we study  the tightness of the estimate \eqref{eq:maing}  and  the estimate \eqref{eq:main}.

\subsection{Observations}
Before going further,  we discuss some observations. Following the line in \cite{Gribonval2003,Gribonval2007}, we are concerned with unions of three or more orthonormal bases.
Suppose that the  number of orthonormal bases is $q+1$. We discuss two cases.
\begin{enumerate}
	\item
	Let the mutual coherence of $\bm{D}$  be $1/q$. Then
	\begin{itemize}
		\item The right-hand side of \eqref{eq:maing} is equal to $q+1$.
		\item The right-hand side of \eqref{eq:main} is equal to $q+1$.
		\item If $\x$ is in  $\Ker(\bm{D})$, then it follows from the bound \eqref{eq:main} that $\|\x\|_0 \ge q+1$.
	\end{itemize}
	Based on  those observations, for $q=2^m$, $m=1,2, 3, \ldots$, the goal is to find a dictionary $\bm{D}$
	and a corresponding sparse vector $\x$ that satisfy the following three conditions
	$$
	\bm{D}\x = \bm{0}, \quad \|\x\|_0 = q+1,\quad \mu(\bm{D})=\frac{1}{q}.
	$$
	\item Let the mutual coherence of $\bm{D}$  be $1/q^2$. Then
	\begin{itemize}
		\item The right-hand side of \eqref{eq:maing} is equal to $q^2+1$.
		\item The right-hand side of \eqref{eq:main} is equal to $q^2 + q$.
		\item If $\x$ is in $\Ker(\bm{D})$, then it follows from the bound \eqref{eq:main} that $\|\x\|_0 \ge q^2 + q$.	
	\end{itemize}	
	Based on  those observations, for $q=2^m$, $m=1, 2, 3, \ldots$, the goal is to find a dictionary $\bm{D}$
	and a corresponding sparse vector $\x$ that satisfy the following three conditions
	$$
	\bm{D}\x = \bm{0}, \quad \|\x\|_0 = q^2+q,\quad \mu(\bm{D})=\frac{1}{q^2}.
	$$
\end{enumerate}

\subsection{Contributions}

The main contributions of this paper  are summarized as follows.

\begin{thm}\label{thm:L0}
	For any $q=2^m$, $m=1,2,3,\ldots$, there exists a dictionary $\bm{D}$ which is a union of $q+1$ orthonormal bases satisfies
	$\eta(\bm{D}) = q+1$ and  $\mu(\bm{D})=1/q$. Then
	$$
	\eta(\bm{D})  = 1 + \frac{1}{\mu (\bm{D)}} \quad
	\mbox{and}\quad
	\eta(\bm{D})=
	\left( 1 + \frac{1}{q} \right)  \frac{1}{\mu(\bm{D})}.
	$$
\end{thm}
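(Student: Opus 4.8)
The two displayed identities are immediate once the dictionary has the advertised parameters: substituting $\mu(\bm{D})=1/q$ gives $1+1/\mu(\bm{D})=q+1$ and $(1+1/q)\,(1/\mu(\bm{D}))=(1+1/q)q=q+1$, and both equal $\eta(\bm{D})$. So the entire task is to exhibit one dictionary $\bm{D}$, a union of $q+1$ orthonormal bases, with $\mu(\bm{D})=1/q$ and $\eta(\bm{D})=q+1$. The plan is to build $\bm{D}$ in dimension $N=q^{2}$: the mutually unbiased value $1/\sqrt{N}$ matches $1/q$ exactly when $N=q^{2}$, which is the regime I would work in.

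\emph{Construction and coherence.} Since $q=2^{m}$ is a prime power, dimension $q^{2}$ carries a full complement of mutually unbiased bases, and I would select the specific sub-family of $q+1$ of them coming from quantum information theory (equivalently, from a maximal set of $q-1$ mutually orthogonal Latin squares of order $q$, i.e.\ from the affine plane of order $q$). Writing $\bm{D}=[\,\bmB_{0}\mid\bmB_{1}\mid\cdots\mid\bmB_{q}\,]$ for the concatenation of their unitary basis matrices, the coherence splits into two cases: two atoms from the same $\bmB_{i}$ are orthonormal and contribute $0$ off the diagonal, while two atoms from distinct $\bmB_{i},\bmB_{j}$ have inner product of modulus $1/\sqrt{N}=1/q$ by mutual unbiasedness. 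Hence $\mu(\bm{D})=1/q$. The lower bound is then free: because $\bm{D}$ is a union of $q+1$ orthonormal bases, the Gribonval--Nielsen inequality \eqref{eq:main} gives $\eta(\bm{D})\ge(1+1/q)\,(1/\mu(\bm{D}))=q+1$.

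\emph{The crux: a dependence of size $q+1$.} It remains only to produce a single linearly dependent set of $q+1$ columns, since \eqref{eq:main} already forbids any smaller dependence; equivalently I must construct an explicit $\x\in\Ker(\bm{D})$ with $\|\x\|_{0}=q+1$. I would first try the balanced profile, taking one atom from each basis, so $\x=(c_{0},\ldots,c_{q})$ with every $\|c_{i}\|_{0}=1$ and the requirement $\bm{D}\x=\bm{0}$ becomes a weighted relation $\sum_{i=0}^{q}\alpha_{i}\bm{u}_{i}=\bm{0}$ among $q+1$ unit atoms $\bm{u}_{i}$. In the Latin-square/affine-plane model each atom is a Fourier vector supported on one line of a parallel class, and the idea is to choose the $q+1$ lines (one per class) so that no three are concurrent: then every covered point lies on exactly two chosen lines, and the cancellation reduces to one scalar equation per intersection point relating the two unit-modulus entries there. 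Solving these pairwise equations forces all $|\alpha_{i}|$ equal and then fixes the frequencies and phases consistently around the triangles of the configuration.

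\emph{Main obstacle.} The hard part will be exactly this last consistency. For a generic choice of $q+1$ mutually unbiased bases no size-$(q+1)$ dependence need exist at all: a direct check for $q=2$ in $\mathbb{C}^{4}$ suggests that three atoms, one from each of three generic MUBs, lie in a common plane only for very special triples, so the family and the line configuration must be engineered so that the per-triangle phase conditions are simultaneously solvable. This is where the hypothesis $q=2^{m}$ is expected to enter --- through the availability of a suitable ``one line per direction, no three concurrent'' configuration (furnished in even characteristic by a hyperoval) together with the character-sum identities of the chosen quantum-information family that make the unit-modulus phases cancel exactly. Once such an $\x$ is exhibited and $\bm{D}\x=\bm{0}$, $\|\x\|_{0}=q+1$ are verified directly, we obtain $\eta(\bm{D})=q+1$, and the two displayed equalities follow by the substitution above.
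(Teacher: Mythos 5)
Your reduction is correct and your plan coincides with the paper's actual route: work in dimension $q^2$, take the $q+1$ MUBs of \cite{Wocjan2004} built from the Latin squares of $\F_q$ (so $\mu(\bm{D})=1/q$ and \eqref{eq:main} gives $\eta(\bm{D})\ge q+1$), then kill the remaining inequality with an explicit null vector having one atom per basis, arranged so that every covered point of the net lies on exactly two chosen lines. Your guess about how $q=2^m$ enters is also the right one: the paper's lines are $y=bx+b^2$ together with the fiber $x=0$, i.e.\ exactly a dual-hyperoval-type configuration available in even characteristic, and Theorem \ref{index} is the statement that lines $b_1,b_2$ meet on fiber $i$ precisely when $b_1+b_2=i$, so no three are concurrent. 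But your proposal stops at the decisive step and acknowledges it: ``the character-sum identities \ldots make the unit-modulus phases cancel exactly'' is asserted, not proved, and no choice of atoms or phases is ever written down. This is a genuine gap, because the consistency you need does \emph{not} hold for the natural first candidate. The paper takes as the atom from $\bmB_b$ the embedded column $b$ of a Hadamard matrix with all coefficients $\pm1$ \eqref{eq:x^i}, and the per-intersection cancellation then requires the sign identity $\tilde h^{(q)}_{i,j_1}=-\tilde h^{(q)}_{i,j_2}$ whenever $j_1+j_2=i\ne 0$. The plain Sylvester matrix $\bmH^{(q)}$ fails this already for $q=4$ (e.g.\ $h^{(4)}_{i,j_1}=h^{(4)}_{i,j_2}=1$ for $i=j_1+j_2$ with $j_1=00$, $j_2=11$), which is why the paper introduces the specific row permutation $\sigma$ of \eqref{eq:hhs} and proves Theorem \ref{a2hadamard}; your sketch has no counterpart for this ingredient, and without it the ``pairwise equations'' you set up are not simultaneously solvable by the obvious phases.

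Two smaller points. First, the pairing is not quite ``every covered point lies on exactly two of the $q+1$ lines'' with uniform bookkeeping: on fibers $i\ne 0$ the affine atoms cancel in pairs $(b,\,i+b)$ by Theorems \ref{index} and \ref{a2hadamard}, but on the fiber $i=0$ they \emph{add} to $\bm{e}_0\otimes\bm{1}_q$ (here Lemma \ref{lem:diagonal of G}, i.e.\ bijectivity of Frobenius, is what makes the intercepts $b^2$ sweep out the whole fiber), and this residue is cancelled by the single $\infty$-atom $\x^\infty=-\bm{e}_0\otimes\bm{e}_0$, whose image is $-\bm{e}_0\otimes\bm{1}_q$; this asymmetric two-case computation is the content of Theorem \ref{main thm}. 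Second, your instinct that magnitudes are forced equal and only phases are at issue is consistent with the paper's solution (all coefficients are $\pm1$), but in the real Hadamard setting the ``phases'' are signs, so the solvability question is a concrete combinatorial identity about $\bmH^{(2)}$-tensor rows rather than a character-sum computation; it is settled constructively by \eqref{eq:hhs}, not by genericity or an existence argument. In short: right framework, right geometry, but the theorem's actual content --- the explicit sign-consistent choice of atoms --- is missing from the proposal.
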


\begin{thm}\label{thm:L02}
	For any $q=2^m$, $m=1,2,3,\ldots$, there exists a dictionary $\bm{D}$ which is a union of $q+1$ orthonormal bases satisfies
		$\eta(\bm{D}) = q^2+q$ and  $\mu(\bm{D})=1/q^2$. Then
	$$
	\eta(\bm{D})  > 1 + \frac{1}{\mu (\bm{D)}}
	\quad
	\mbox{and}\quad
	\eta(\bm{D})=
	\left( 1 + \frac{1}{q} \right)  \frac{1}{\mu(\bm{D})}.
	$$
\end{thm}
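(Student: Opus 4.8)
The plan is to make both assertions follow from a single construction, so first I would reduce to it. Suppose I can produce a dictionary $\bmD$ that is a union of $q+1$ pairwise mutually unbiased orthonormal bases of $\mathbb{C}^{q^4}$ together with a vector $\x\in\Ker(\bmD)$, $\x\neq\bm 0$, whose support satisfies $\|\x\|_0\le q^2+q$. For mutually unbiased bases every inner product between columns of two distinct bases has modulus $1/\sqrt{q^4}=1/q^2$, while columns of one basis are orthonormal, so $\mu(\bmD)=1/q^2$. The vector $\x$ gives $\eta(\bmD)\le\|\x\|_0\le q^2+q$, while the lower bound \eqref{eq:main} gives $\eta(\bmD)\ge(1+1/q)\,q^2=q^2+q$; hence $\eta(\bmD)=q^2+q$. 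Substituting $\mu(\bmD)=1/q^2$ then yields $1+1/\mu(\bmD)=q^2+1<q^2+q=\eta(\bmD)$ for every $q\ge 2$, which is the strict inequality, and $(1+1/q)/\mu(\bmD)=(1+1/q)q^2=q^2+q=\eta(\bmD)$, which is the claimed equality. A useful simplification is that, because \eqref{eq:main} already forces $\eta(\bmD)\ge q^2+q$, it is enough to exhibit \emph{some} nonzero kernel vector of support at most $q^2+q$: the matching value is then automatic, and I never have to argue directly that the support cannot be smaller.

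For the bases themselves I would invoke the family of mutually unbiased bases of $\mathbb{C}^{q^4}$, which exists because $q^4=2^{4m}$ is a prime power. This is the family originating in quantum information theory; its combinatorial skeleton is a (partial) set of mutually orthogonal Latin squares of order $q^2$ adjoined to the standard and Fourier-type bases, which is the reason Latin squares enter. Selecting a suitable subfamily of exactly $q+1$ of these bases gives the desired $\bmD$, and mutual unbiasedness is inherited from the full family, so $\mu(\bmD)=1/q^2$ as computed above. This parallels Theorem~\ref{thm:L0}, the only change being that the relevant dimension is $q^4$ rather than $q^2$.

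It remains to build the kernel vector, which I expect to be the crux. Writing $\x=(\x_0,\dots,\x_q)$ according to the $q+1$ blocks and $\bm u_i=\bmB_i\x_i$ for the $i$-th partial sum, I would choose in each basis $\bmB_i$ at most $q$ columns and unimodular coefficients so that $\sum_{i=0}^{q}\bm u_i=\bm 0$. Each block then contributes at most $q$ nonzeros, so $\|\x\|_0\le q(q+1)=q^2+q$, and forcing one block to be nonzero makes $\x\neq\bm0$; by the reduction above this already completes the proof. The hard part is the identity $\sum_i\bm u_i=\bm 0$ itself: each $\bm u_i$ is a Gauss-type character sum over the underlying $\F_2$-structure of the construction, so the vanishing of the total is a combinatorial cancellation dictated by the Latin-square data, and the choice of which columns to retain in each basis, and with which phases, must be made precisely so that these sums telescope. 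Verifying this cancellation by a direct evaluation of the relevant character sums is the single step I expect to require genuine work; once it is in place, everything else is the bookkeeping recorded in the first paragraph.
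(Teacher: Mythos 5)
Your reduction is sound and coincides with the paper's own logic: once one exhibits $q+1$ mutually unbiased bases in dimension $q^4$ (so that $\mu(\bmD)=1/q^2$) together with a nonzero vector in $\Ker(\bmD)$ of support at most $q^2+q$, the lower bound \eqref{eq:main} forces $\eta(\bmD)\ge(1+1/q)q^2=q^2+q$, hence equality, and the strict inequality $\eta(\bmD)=q^2+q>q^2+1=1+1/\mu(\bmD)$ is automatic for $q\ge2$. Your choice of dictionary also matches the paper in spirit: the paper builds the full MUB family $\{\bmB_c\mid c\in\F_{q^2}\cup\{\infty\}\}$ in dimension $q^4$ from Latin squares over $\F_{q^2}$ and a permuted Hadamard matrix, and retains only the $q+1$ bases indexed by $\F_q\cup\{\infty\}$, as in \eqref{eq:dic22}.

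The genuine gap is that the step you defer as ``the single step I expect to require genuine work'' is the entire content of the theorem, and nothing in your sketch shows it can be carried out. You assert that one can select $q$ columns per basis, with unimodular phases, so that the partial sums ``telescope,'' but you give no selection rule, no phases, and no argument that any such choice exists; in particular nothing in your proposal constrains \emph{which} $q+1$ bases of the ambient family to keep, whereas the paper's cancellation works precisely because the retained bases are indexed by the subfield $\F_q\subseteq\F_{q^2}$. The machinery you are missing is: the quotient $\F_{q^2}/\F_q$ with the isomorphism $\xi:\F_q\to\F_{q^2}/\F_q$ and the section $\iota$; a basis of $\F_{q^2}$ over $\F_2$ chosen so subfield elements have the form $\omega_1\cdots\omega_m0\cdots0$; and the specific row permutation \eqref{eq:hhs} of the Hadamard matrix. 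With these, the kernel vector is taken as $\y^{b}=\sum_{[j]=\xi(b^2)}\bm{e}_{j}\otimes\bm{e}_{\iota(b)}$ for $b\in\F_q$ and $\y^{\infty}=-\sum_{b\in\F_q}\bm{e}_{b}\otimes\bm{e}_{0}$ (see \eqref{eq:y^i}), and $\bmD\y=\bm0$ is verified row by row: for rows $i\in\F_q$, Lemma~\ref{lem: permuted matrix of q^2} makes all signs $+1$ and the bijectivity of $b\mapsto b^2$ (Lemma~\ref{lem:diagonal of G}) makes the cosets $\xi(b^2)$ partition $\F_{q^2}$, while for rows $i\notin\F_q$ the fixed-point-free pairing $b\mapsto b_*-b$ of Lemma~\ref{lem: change of b} matches terms that cancel by the sign identity $\tilde{h}^{(q^2)}_{i,\iota(b)}+\tilde{h}^{(q^2)}_{i,\iota(b_*-b)}=0$. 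None of this is bookkeeping, and your appeal to ``Gauss-type character sums'' in the complex Galois-field MUBs would, even if pursued, require its own verification (the paper's construction is real and combinatorial, not character-theoretic). As it stands, your proposal proves only the known lower bound, not its tightness.
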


Therefore,
for unions of  several orthonormal bases,
both the inequality \eqref{eq:maing} and the inequality \eqref{eq:main} are achievable.
Theorem \ref{thm:L02} implies that the estimate \eqref{eq:main} is sharper than the  estimate \eqref{eq:maing}
for some special dictionaries.
Now we can answer  Gribonval and Nielsen's open problem in \cite{Gribonval2003} positively:
\begin{center}
	``\textit{There  exist examples of $q+1$ orthonormal bases for which
		$
		\eta(\bm{D})\mu(\bm{D}) = 1 + {1}/{q}.
		$}''
\end{center}

\subsection{Flowchart}

Our constructive proofs base on techniques from discrete mathematics and quantum information theory \cite{Bose1938,colbourn1996,Thomas,Wocjan2004}.
Two families of dictionaries are constructed  by using the \textit{mutually unbiased bases} (MUBs) obtained in \cite{Wocjan2004}.
The process of construction is illustrated in Figure \ref{fig1} step by step.
Symbols' meanings are listed in Table \ref{tab:1}.
The  existence theorem of MUBs is clear, see e.g. \cite{Thomas,Wocjan2004}.
To find  specific sparse vectors in the null space of such dictionaries, however, more detailed properties need to be established.

\tikzset{
	nonterminal/.style = {
		rectangle,
		align = center,
		minimum size = 6mm,
		very thick,
		draw = red!50!black!50,
		top color = white,
		bottom color = red!50!black!20,
	}
}
\tikzset{
	terminal/.style = {
		rectangle,
		align = center,
		minimum size = 6mm,
		rounded corners = 3mm,
		very thick,
		draw = black!50,
		top color = white,
		bottom color = black!20,
	}
}
\tikzstyle{line} = [draw=purple, thick, -latex']

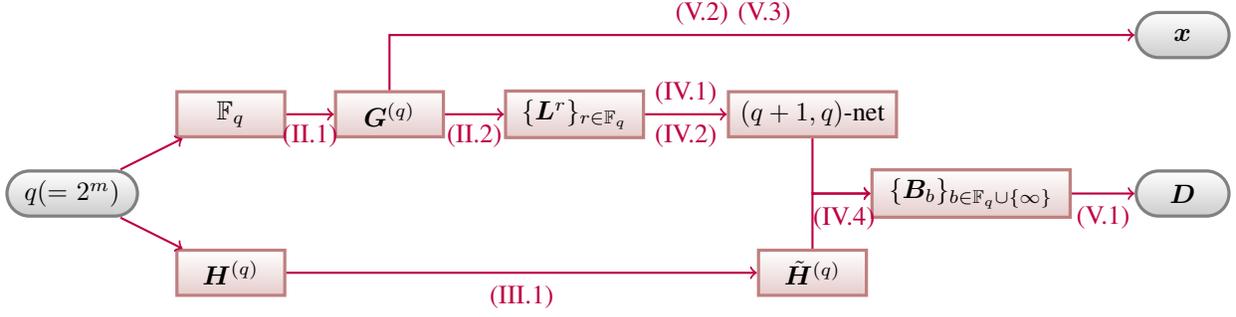
\begin{figure*}
	
	\begin{center}
		
		\begin{tikzpicture}[node distance = 3em, auto]
			
			\node [terminal, text width = 1.5cm] (init) {$q(=2^m)$};
			\node [nonterminal, text width = 1.2cm, right of= init, xshift= 3em,yshift= 3em](GF){$\mathbb{F}_q$};
			\node [nonterminal, text width = 1.2cm, right of= GF, xshift= 3em](Gs){$\bm{G}^{(q)}$};
			\node [nonterminal, text width = 1.6cm, right of= Gs, xshift= 4em](L){$\{\bm{L}^{r}\}_{r\in \F_q}$};
			\node [nonterminal, text width = 2cm, right of=  L, xshift= 6em](Net){$(q+1,q)$-net};
			\node [nonterminal, text width = 2.4cm, right of = Net,xshift= 3em,yshift= -3em](B){$\{\bm{B}_b\}_{b\in\F_q\cup\{\infty\}}$};
			\node [nonterminal, text width = 1.2cm, right of= init, xshift= 3em,yshift= -3em](H){$\bm{H}^{(q)}$};
			\node [nonterminal, text width = 1.2cm, below of = Net,yshift=-3em](HS){$\tilde{\bm{H}}^{(q)}$};
			\node [terminal, text width = 1.0cm, right of = B, xshift= 5em](OutD){$\bm{D}$};	
			\node [terminal, text width = 1.0cm, above of = OutD, yshift= 3em](Outx){$\bm{x}$};
			
			\draw [color=purple,thick,->](init) -- (GF);
			\draw [color=purple,thick,->](GF) -- node[pos=0.5,below,yshift=0em]{\eqref{matrixG}}(Gs);		
			\draw [color=purple,thick,->](Gs) -- node[pos=0.5,below,yshift=0em]{\eqref{latin}}(L);		
			\draw [color=purple,thick,->](L) -- node[pos=0.5,below,yshift=0em]{\eqref{q-th block}}node 		  [pos=0.5,above,yshift=0em]{\eqref{ks-net}}(Net);	
			\draw [color=purple,thick,->](B) -- node[pos=0.5,below,yshift=-0.1em]{\eqref{eq:dic}}(OutD);
			\draw [color=purple,thick,->](init) -- (H);
			\draw [color=purple,thick,->](H) -- node[pos=0.5,below,yshift=-0.1em]{\eqref{eq:hhs}}(HS);				
			
			\draw [color=purple,thick,->](Net)|- node[pos=0,below,xshift=1.2em,yshift=-2.2em]{\eqref{eq:mubss}}(B);				
			\draw [color=purple,thick,->](HS) |- (B);			
			
			\draw [color=purple,thick,->](Gs) |- node[pos=0.5,above,xshift=13em,yshift=0em]
			{\eqref{eq:x^i} \eqref{eq:x} }(Outx);				
			
		\end{tikzpicture}

		\caption{ Process flow diagram that explains the construction of dictionaries $\bm{D}$ and vectors $\x$.} \label{fig1}
		
	\end{center}
\end{figure*}

\subsection{Outline}

The rest of this paper is organized as follows. Section \ref{sec:GF}  defines
two kinds of matrices by using elements in Galois fields. One of them contains mutual orthogonal Latin Squares, the other is used for
theoretical analysis.
Section \ref{sec:hadamard} defines a class of real Hadamard matrices.
Section \ref{sec:mub} obtains mutual unbiased bases in square dimension with explicitly structures.
Section \ref{sec:main} proves main results and answers the open problem using the mutual unbiased bases constructed in section \ref{sec:mub}.
Section \ref{sec:exa} presents three examples to illustrate our constructions and theoretical proofs.
Section \ref{sec:con} gives conclusions and further remarks.

\begin{table}[tbhp]
	{
		\footnotesize
		\caption{Notation in Figure \ref{fig1}. } \label{tab:1}
		\begin{center}
			\renewcommand\arraystretch{1.5}
			\begin{tabular}{|c|c|}
				\hline
				Galois Field of order $q$ &  $\mathbb{F}_q$ \\ \hline
				Matrices of order $q$ over $\mathbb{F}_q$ &  $ \bm{G}^{(q)}$ \\ \hline
				Latin Squares over $\F_q$ & $\{\bm{L}^{r}\}_{r\in\F_q}$\\ \hline
				Mutually Unbiased Bases for $\mathbb{R}^{q^2}$ & $\{\bm{B}_b\}_{b\in\F_q\cup \infty}$\\ \hline
				Hadamard matrix of order $q$ & $\bm{H}^{(q)}$\\ \hline
				Permutated Hadamard matrix of order $q$ & $\tilde{\bm{H}}^{(q)}$\\ \hline
				Dictionary of size $(q^2, q^2(q+1))$.
				& $\bm{D}$\\ \hline
				Sparse vector in $\mathbb{R}^{q^2(q+1)}$ & $\x$\\ \hline
			\end{tabular}
		\end{center}
	}
\end{table}

\section{Families of matrices} \label{sec:GF}
We briefly recall the Galois Field and Latin Squares in discrete mathematics \cite{denes2015}.
Then we construct two families of matrices by using  operations of Galois field $\F_q$ and establish some properties.

\subsection{Galois fields}
Write  $\F_2=\{0,1\}$ for the prime filed of order $2$. Let $m$ be a positive integer. In the rest of this paper, we assume that  $q=2^m$.
The \textit{Galois Field} of order
$q$ is a finite field of characteristic $2$, denoted by
$
\mathbb{F}_q.
$
As a vector space over $\F_2$, $\F_q$ is $m$-dimensional, and so the elements of $\F_q$ have a one to one correspondence to ones of $\F_2^m$.

More precisely, for any $i$ in $\F_q$, there exist $\omega_1,\omega_2,\cdots,\omega_m$ in $\F_2$ such that $i$ can be represented as follows
\begin{equation*}
	i=\omega_1\omega_2\cdots\omega_m,
\end{equation*}
with respect to some basis of the vector space $\F_q$ over $\F_2$. In particular, $0\in\F_q$ can represented by
$$
\underbrace{00\cdots0}_m.
$$

%

Naturally, there is an $\F_q$-indexed square matrix over $\mathbb{F}_q$ arisen by the multiplication table of $\mathbb{F}_q$, denoted by $\bm{G}^{(q)}$, that is, the $(i,j)$-th entry is
\begin{equation}\label{matrixG}
	g^{(q)}_{i,j}=ij
\end{equation}
for any $i$ and $j$ in $\mathbb{F}_q$.
\begin{lemma} \label{lem:diagonal of G}
	For any given $q=2^m$, the diagonal of $ \bm{G}^{(q)} $ is a permutation of the elements of $ \mathbb{F}_q $.
\end{lemma}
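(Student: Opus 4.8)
The plan is to observe that the diagonal entries of $\bm{G}^{(q)}$ are exactly the squares in the field: by \eqref{matrixG} we have $g^{(q)}_{i,i} = i^2$ for every $i \in \F_q$. Hence the lemma is equivalent to the assertion that the squaring map $\phi : \F_q \to \F_q$, $\phi(i) = i^2$, is a bijection, since the diagonal lists precisely the values $\phi(i)$ as $i$ ranges over $\F_q$.

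Since $\F_q$ is a finite set and $\phi$ maps it to itself, it suffices to prove that $\phi$ is injective; surjectivity, and hence the permutation property, then follows automatically by the pigeonhole principle. First I would record the key structural feature, namely that $\F_q$ has characteristic $2$. In such a field one has the Frobenius identity $(x+y)^2 = x^2 + y^2$, because the cross term $2xy$ vanishes, and moreover $-1 = 1$, so that $x - y = x + y$ for all $x,y$.

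To establish injectivity I would argue as follows. Suppose $i^2 = j^2$ for some $i, j \in \F_q$. Then $i^2 - j^2 = 0$, and factoring gives $(i-j)(i+j) = 0$. Using $i + j = i - j$ in characteristic $2$, this reads $(i-j)^2 = 0$. Because $\F_q$ is a field and therefore has no zero divisors, we conclude $i - j = 0$, i.e. $i = j$. Thus $\phi$ is injective, hence bijective, and so the diagonal of $\bm{G}^{(q)}$ runs over every element of $\F_q$ exactly once.

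There is essentially no genuine obstacle here: the statement is a concrete instance of the fact that the Frobenius endomorphism $x \mapsto x^{2}$ is an automorphism of a finite field of characteristic $2$. The only point requiring a little care is the passage from injectivity to the full permutation property, which relies on the finiteness of $\F_q$; everything else is an immediate consequence of the characteristic-$2$ arithmetic.
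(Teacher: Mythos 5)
Your proof is correct and follows essentially the same route as the paper: both reduce the lemma to the injectivity of the squaring (Frobenius) map, verified via the characteristic-$2$ identity $(i-j)^2 = i^2 - j^2$ together with the absence of zero divisors, and then conclude bijectivity from the finiteness of $\mathbb{F}_q$. The paper phrases the counting step in terms of the image set $Z=\{a^2 \mid a \in \mathbb{F}_q\}$ having $q$ elements, which is the same pigeonhole argument you invoke.
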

\begin{proof}
It is an immediate consequence of the fact that the Frobenius map of $\F_q$ is a bijection. For completeness, we give a whole proof.

Write $Z$ for the set $\{a^2\mid a\in \mathbb{F}_q\}$. It suffices to show that the set $Z$ equals $\mathbb{F}_q$. Suppose $a^2=b^2$ where $a,b\in \mathbb{F}_q$. Since the characteristic of the Galois filed $\mathbb{F}_q$ is $2$, we have $b^2=-b^2$, $2ab=0$, and
	$$(a- b)^2=a^2+b^2=a^2-b^2=0.$$
	Hence, $a=b$. It implies there are $q$ elements in $Z$, and $Z=\mathbb{F}_q$.
\end{proof}

\subsection{Two  famlilies of matrices} \label{subset:L}

A \textit{Latin square} of order $ q $ is a square matrix of order $q$  with entries from a set of cardinality $ q $ such that each element occurs once in each row and each column.  Two Latin squares $ \bm{L} $ and $ \bm{L}' $  are said to be \textit{orthogonal}  if all the ordered pairs are different.
A collection of Latin squares of order $ q $, any pair of which is orthogonal, is called a set of \textit{mutually orthogonal Latin squares}. Any Galois field $\F_q$  generates $q-1$ different orthogonal Latin squares of $q$ symbols.

We define a family of $\F_q$-indexed square matrices $ \{\bmL^r\}_{r\in \mathbb{F}_q} $ over $\mathbb{F}_q$,  where the $(i,j)$-th entry of $\bmL^r$ is
\begin{equation}\label{latin}
	l^r_{i,j}=g^{(q)}_{i,r}+j,
\end{equation}
for any $i,j,r$ in $\mathbb{F}_q$.
\begin{rem}
	It is not hard to check that $\{\bm{L}^r\}_{r\in \F_q^*}$ is a family of $q-1$ different orthogonal Latin squares, where $\mathbb{F}^*_q=\F_q-\{0\}.$
\end{rem}

There is a useful property of $\{\bm{L}^r\}_{r\in \F_q}$.

\begin{pro}\label{pro_latin}
	Let $\{\bm{L}^r\}_{r\in \F_q}$ be the family of squares matrices defined in \eqref{latin}.
	\begin{enumerate}
		\item For any  distinct $j_1,j_2\in \F_q$ and any $i,r\in\F_q $,  $ l^r_{i,j_1}\neq l^r_{i,j_2}$.
		\item For any distinct $r,r'\in \F_q $ and any $j_1, j_2\in\F_q$, there exists a unique $ i\in\F_q $ such that
		$$ l^{r}_{i,j_1}= l^{r'}_{i,j_2}.$$
	\end{enumerate}
\end{pro}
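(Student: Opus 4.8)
The plan is to unwind both statements into elementary identities in $\F_q$ using the explicit formula $l^r_{i,j}=g^{(q)}_{i,r}+j=ir+j$ coming from \eqref{matrixG} and \eqref{latin}. Once this substitution is made, each claim becomes a short computation in the field, so the work is really just a matter of reading off the right arithmetic.

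For the first assertion I would fix $i$ and $r$ and note that $g^{(q)}_{i,r}$ is a constant independent of $j$. Thus $j\mapsto l^r_{i,j}$ is a translation by $g^{(q)}_{i,r}$ on the additive group of $\F_q$. If $l^r_{i,j_1}=l^r_{i,j_2}$, then cancelling the common summand $g^{(q)}_{i,r}$ yields $j_1=j_2$; contrapositively, distinct $j_1,j_2$ force $l^r_{i,j_1}\neq l^r_{i,j_2}$. This uses nothing beyond additive cancellation and is essentially the statement that each row of a Latin square has distinct entries.

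For the second assertion I would rewrite the desired equality $l^r_{i,j_1}=l^{r'}_{i,j_2}$ as $ir+j_1=ir'+j_2$, that is, $i(r-r')=j_2-j_1$, which in characteristic $2$ reads $i(r+r')=j_1+j_2$. Since $r\neq r'$, the element $r-r'$ is nonzero, and this is exactly where the field structure of $\F_q$ is essential: every nonzero element is invertible. Hence the equation has the single solution $i=(j_2-j_1)(r-r')^{-1}$, which establishes existence and uniqueness simultaneously.

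I do not anticipate a genuine obstacle, since both parts collapse to one-line field computations. The only point deserving care is to invoke invertibility of nonzero elements of $\F_q$ in the second part, rather than a cancellation argument valid only in an integral domain; the hypothesis that $r$ and $r'$ are distinct is precisely what guarantees that $(r-r')^{-1}$ is well defined.
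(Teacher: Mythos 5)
Your proposal is correct and follows essentially the same route as the paper: part (1) is additive cancellation of the constant summand $g^{(q)}_{i,r}$, and part (2) reduces the equality $l^r_{i,j_1}=l^{r'}_{i,j_2}$ to $i(r-r')=j_2-j_1$, solved uniquely because $r-r'\neq 0$ is invertible (the paper phrases this as $\F_q=\{(r-r')i\mid i\in\F_q\}$, which is the same bijectivity fact). Your explicit remark about invertibility versus mere cancellation is a fair clarification but does not change the argument.
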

\begin{proof}
	\begin{enumerate}
		\item Since $j_1\neq j_2$, we have
		$$
		g^{(q)}_{i,r}+ j_1\neq g^{(q)}_{i,r}+ j_2	
		$$
		for any $i,r\in\F_q$. It implies that $ l^r_{i,j_1}\neq l^r_{i,j_2}$ by \eqref{latin}.
		
		
		\item   Since $ r\neq r'$, it is easy to deduce that
		$$
		\mathbb{F}_q=\left\{(r- r') i\mid i\in\F_q\right\}.
		$$
		For any $j_1,j_2\in \F_q$, there exits a unique $ i\in \F_q $ such that $
		(r- r') i=j_2- j_1.$
		Then
		$$
		l_{i,j_1}^r=g^{(q)}_{i,r}+j_1= ri+ j_1= r'i+ j_2=g^{(q)}_{i,r'}+j_2=l^{r'}_{i,j_2}.
		$$\qed
	\end{enumerate}	
	The proof is completed.
\end{proof}

We choose one column from each of Latin squares $\{\bm{L}^r\mid r\in\F_q\}$ to define a new $\F_q$-indexed square matrix $\bm{A}^{(q)}$ over $\F_q$ such that the $j$-th column $\bm{a}^{(q)}_{j}$ of $\bm{A}^{(q)}$ satisfies
\begin{equation}\label{A_L}
	\bm{a}^{(q)}_{j}=\bm{l}^j_{j^2},\quad j\in\F_q,
\end{equation}
where $\bm{l}^j_{j^2}$ is the ${j^2}$-th column vector of $\bmL^j$ for any $j\in\mathbb{F}_q$.
More precisely, the $(i,j)$-th entry $a_{i,j}^{(q)}$ of $\bm{A}^{(q)}$ is
\begin{equation}\label{eq:L and A}
	a^{(q)}_{i,j}=ij+j^2=l_{i,j^2}^{j},\quad i,j\in\F_q.
\end{equation}

\begin{thm}\label{index}
	For any given $ q=2^m $, the $0$-th row of $ \bmA^{(q)} $ is a permutation of the elements of $ \mathbb{F}_q $, while $ a^{(q)}_{i,j_1}=a^{(q)}_{i,j_2} $ if and only if $ j_1+ j_2=i $ for any $ i,j_1, j_2\in\F_q$ and $j_1\neq j_2$.
\end{thm}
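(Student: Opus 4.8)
The plan is to treat the two assertions of Theorem~\ref{index} separately; both reduce to elementary arithmetic in the characteristic-$2$ field $\F_q$ together with Lemma~\ref{lem:diagonal of G}.

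For the first assertion I would simply specialize the entry formula \eqref{eq:L and A} to the row index $i=0$, which gives $a^{(q)}_{0,j}=0\cdot j+j^2=j^2$ for every $j\in\F_q$. Thus the $0$-th row records the squaring map $j\mapsto j^2$, and Lemma~\ref{lem:diagonal of G} (equivalently, the bijectivity of the Frobenius map on $\F_q$) states precisely that this map permutes $\F_q$. Hence the $0$-th row is a permutation of the elements of $\F_q$.

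For the second assertion I would begin from the equality $a^{(q)}_{i,j_1}=a^{(q)}_{i,j_2}$, that is $ij_1+j_1^2=ij_2+j_2^2$, and rearrange it, using that subtraction coincides with addition in characteristic $2$, into $i(j_1+j_2)=j_1^2+j_2^2$. The decisive step is the characteristic-$2$ identity $j_1^2+j_2^2=(j_1+j_2)^2$, which rewrites the right-hand side and yields $i(j_1+j_2)=(j_1+j_2)^2$. Since $j_1\neq j_2$ forces $j_1+j_2\neq 0$, I can cancel the common factor $j_1+j_2$ to conclude $i=j_1+j_2$. Because every manipulation is reversible whenever $j_1\neq j_2$, this chain of equivalences establishes both directions of the ``if and only if'' at once.

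I do not anticipate a genuine obstacle: the whole argument is a short computation. The only point requiring care is the systematic exploitation of characteristic $2$ — specifically the identity $j_1^2+j_2^2=(j_1+j_2)^2$ and the equivalence $j_1+j_2=0 \iff j_1=j_2$ — which together legitimize the cancellation and pin down the factor $i=j_1+j_2$ uniquely. Outside characteristic $2$ neither the factorization nor the clean conclusion would survive, so the role of the hypothesis $q=2^m$ is essential and I would flag it explicitly.
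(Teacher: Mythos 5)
Your proposal is correct and follows essentially the same route as the paper's own proof: the first assertion via $a^{(q)}_{0,j}=j^2$ and Lemma~\ref{lem:diagonal of G}, and the second via the characteristic-$2$ identity $j_1^2+j_2^2=(j_1+j_2)^2$ followed by cancellation of the nonzero factor $j_1+j_2$. The only cosmetic difference is that the paper verifies the ``if'' direction by direct substitution ($a^{(q)}_{i,j_1}=j_2j_1=j_1j_2=a^{(q)}_{i,j_2}$ when $i=j_1+j_2$) while you obtain both directions at once by noting your chain of manipulations is reversible, which is equally valid.
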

\begin{proof} 
	%
	Clearly, each entry of $0$-th row of $\bm A^{(q)}$ is $a^{(q)}_{0,j}=j^2=g^{(q)}_{j,j}$ for $j\in\F_q$. It is exactly the diagonal of $\bm G^{(q)}$, then the first result follows from Lemma \ref{lem:diagonal of G}.
	
	For any given $i\in\F_q$, assume $ j_1+ j_2=i$ where $j_1,j_2\in F_q$ and $j_1\neq j_2$. We have
	\begin{align*}
		a^{(q)}_{i,j_1}=i j_1+ j_1^2
		=\left((j_1+ j_2) j_1\right)+ j_1^2
		=j_2 j_1,
	\end{align*}
		and
			\begin{align*}
		a^{(q)}_{i,j_2}=i j_2+ j^2_2
		=\left((j_1+ j_2) j_2\right)+ j_2^2
		=j_1 j_2.
	\end{align*}
	Hence, $ a^{(q)}_{i,j_1}=a^{(q)}_{i,j_2} $.

	If $ a^{(q)}_{i,j_1}=a^{(q)}_{i,j_2}, $ where $j_1,j_2\in\F_q$ and $j_1\neq j_2$, that is,
	$$
	i j_1+ j_1^2= i j_2+ j^2_2.
	$$
	Adding $i j_2+ j_1^2$ to both sides of the equation above, one obtains that
	$$
	i(j_1+ j_2)=j_1^2+ j_2^2=(j_1+ j_2)^2.
	$$
	Then $i=j_1+ j_2$ holds, since $j_1+ j_2\neq 0$ in $\mathbb{F}_q$.
\end{proof}

\section{Hadamard Matrix}\label{sec:hadamard}
The construction of MUBs in \cite{Wocjan2004}  are intimately linked to  Hadamard matrices.
For the desired construction of sparse vectors in the null space, this section focuses on a special  family of real Hadamard matrices.
Recall that a  \textit{Hadamard matrix} $ \bmH $ is a square matrix whose entries are either $ +1 $ or $ -1 $ and whose columns are mutually orthogonal.
Let
$$
\bmH^{(2)}=\begin{pmatrix}
	1&1\\
	1&-1
\end{pmatrix}
$$
be a Hadamard matrix of order $2$. In a natural way, the Hadamard matrix $\bmH^{(2)}$ can be viewed as an $\F_2$-indexed matrix, that is, the entries of $\bmH^{(2)}$ are 
$
h^{(2)}_{0,0}=h^{(2)}_{0,1}=h^{(2)}_{1,0}=1$ and $h^{(2)}_{1,1}=-1.
$


For any given positive integer $ m $,  by
repeating used of the Hadamard matrix   $\bmH^{(2)}$, a Hadamard matrix of order $q=2^m$ can be obtained as follows \cite{Sylvester1867}	
$$
\bmH^{(q)}=\underbrace{\bmH^{(2)}\otimes\dots\otimes \bmH^{(2)}}_{m\ \text{times}}.
$$	
Clearly, $\bmH^{(q)}$ is an  $\F_2^m$-indexed matrix. To be precise, the $\left((\omega_1\omega_2\cdots\omega_m),(\nu_1\nu_2\cdots\nu_m)\right)$-th entry of $\bmH^{(q)}$ is
		$$
		h^{(2)}_{\omega_1,\nu_1}h^{(2)}_{\omega_2,\nu_2}\cdots h^{(2)}_{\omega_m,\nu_m},
		$$
for any 		$\omega_1$, $\omega_2$, $\cdots$, $\omega_m$, $\nu_1$, $\nu_2$, $\cdots$, $\nu_m\in \F_2$.  Such Hadamard matrices have many
applications in computer science and  quantum information \cite{Horadam2007}.

%
%

Let $\varphi$ be a bijective self-mapping of $\F_2^m$ such that $\varphi(0)=0$, and $\bmH_{\varphi}^{(q)}$ an $\F_2^m$-indexed matrix satisfying the $(\omega_1\cdots\omega_m)$-th row of $\bmH_{\varphi}^{(q)}$ is the $\varphi(\omega_1\cdots\omega_m)$-th row of $\bmH^{(q)}$ for any $\omega_1,\cdots,\omega_m\in\F_2$. It is clear that $\bmH^{(q)}_{\varphi}$ is also a Hadamard matrix.

To our end, we choose such a bijection $\sigma$ as follows
$$
\begin{array}{cclc}
\sigma:& \F_2^m&\to &\F_2^m\\
~& \omega_1\omega_2\cdots \omega_m&\mapsto &\omega'_1\omega'_2\cdots \omega'_m,
\end{array}
$$	
where if $k$ is the largest integer such that $\omega_k$ is nonzero,
 $\omega'_1\omega'_2\cdots \omega'_m$ satisfies
\begin{equation}\label{eq:hhs}
	\begin{cases}
		\omega_s'+\omega_s=1,\quad s=1,\dots,k-1,\\
		\omega_k'=\omega_k=1,\\
		\omega_s'=\omega_s=0,\quad s=k+1,\dots,m.
	\end{cases}	
\end{equation}
In the sequel, we denote the permuted Hadamard matrix ${\bmH}_{\sigma}^{(q)}$ by $\tilde{\bmH}^{(q)}$.

Since each element of $\F_q$ has a representation by one of $\F_2^m$ with respect to some basis of the vector space $\F_q$ over $\F_2$,  the Hadamard matrix $\bmH^{(q)}$ and  $\tilde{\bmH}^{(q)}$ are also $\F_q$-indexed matrices. We establish  useful properties of the permuted Hadamard matrix $\tilde{\bmH}^{(q)}$.


%


\begin{thm}\label{a2hadamard}
	For any given $ q=2^m $, the permuted Hadamard matrix $ \tilde{\bmH}^{(q)} $ satisfies the following conditions,
	\begin{enumerate}
		\item the entries in $0$-th row and $0$-th column of $ \tilde{\bmH}^{(q)} $ are $ 1 $;
		\item $ \tilde{h}^{(q)}_{i,j_1}=-\tilde{h}^{(q)}_{i,j_2} $ for $i\neq 0$, $j_1+ j_2=i$ and  $ i,j_1,j_2\in \F_q$,
	\end{enumerate}
	where $\tilde{h}^{(q)}_{i,j}$ is $(i,j)$-th entry of $\tilde{\bmH}^{(q)}$ for any $i,j\in \F_q$.
\end{thm}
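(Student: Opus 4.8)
The plan is to translate every entry of $\tilde{\bmH}^{(q)}$ into a sign governed by an $\F_2$-inner product, and then read off both assertions from a single parity computation. Since $\bmH^{(2)}$ satisfies $h^{(2)}_{a,b}=(-1)^{ab}$ for $a,b\in\F_2$, the tensor-power structure $\bmH^{(q)}=\bmH^{(2)}\otimes\cdots\otimes\bmH^{(2)}$ yields the product formula $h^{(q)}_{a,b}=\prod_{s=1}^m(-1)^{a_sb_s}=(-1)^{\langle a,b\rangle}$, where $a=a_1\cdots a_m$, $b=b_1\cdots b_m$ and $\langle a,b\rangle=\sum_{s=1}^m a_s b_s$ is computed over $\mathbb{Z}$ (only its parity will matter). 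Because $\tilde{\bmH}^{(q)}$ permutes rows by $\sigma$, its entries are $\tilde{h}^{(q)}_{i,j}=h^{(q)}_{\sigma(i),j}=(-1)^{\langle\sigma(i),j\rangle}$. I will also use that the identification $\F_q\cong\F_2^m$ is $\F_2$-linear, so that the hypothesis $j_1+j_2=i$ in $\F_q$ is exactly componentwise addition $\mu_s+\nu_s=\iota_s$ in $\F_2$, where $j_1=\mu_1\cdots\mu_m$, $j_2=\nu_1\cdots\nu_m$ and $i=\iota_1\cdots\iota_m$.

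For part (1), since $\sigma(0)=0$ we have $\tilde{h}^{(q)}_{0,j}=(-1)^{\langle 0,j\rangle}=1$ for every $j$, giving the $0$-th row; and for every $i$ the $0$-th column entry is $\tilde{h}^{(q)}_{i,0}=(-1)^{\langle\sigma(i),0\rangle}=1$. This part uses nothing about $\sigma$ beyond $\sigma(0)=0$.

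For part (2), I would rewrite the desired equality $\tilde{h}^{(q)}_{i,j_1}=-\tilde{h}^{(q)}_{i,j_2}$ as the parity statement $\langle\sigma(i),j_1\rangle+\langle\sigma(i),j_2\rangle\equiv 1\pmod 2$. Bilinearity of $\langle\cdot,\cdot\rangle$ over $\F_2$ together with $j_1+j_2=i$ collapses this to showing that $\langle\sigma(i),i\rangle$ is odd whenever $i\neq 0$. Writing $i=\iota_1\cdots\iota_m$ with largest nonzero index $k$, the definition \eqref{eq:hhs} of $\sigma$ gives $\omega'_s=1-\iota_s$ for $s<k$, $\omega'_k=\iota_k=1$, and $\omega'_s=\iota_s=0$ for $s>k$; hence each term $\omega'_s\iota_s$ with $s<k$ vanishes, since it equals $(1-\iota_s)\iota_s=0$ in both cases $\iota_s\in\{0,1\}$, the terms with $s>k$ vanish as well, and only $\omega'_k\iota_k=1$ survives. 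Therefore $\langle\sigma(i),i\rangle=1$, which supplies the required parity.

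The main obstacle is conceptual rather than computational: one must set up the dictionary between the two indexings of the same matrix correctly — entries indexed by $\F_2^m$ through the Sylvester product, rows permuted by $\sigma$ acting on $\F_2^m$, and the additive hypothesis $j_1+j_2=i$ read in $\F_q$ — and verify they are compatible via the $\F_2$-linear isomorphism $\F_q\cong\F_2^m$. Once the sign formula $\tilde{h}^{(q)}_{i,j}=(-1)^{\langle\sigma(i),j\rangle}$ is in place, the content of the theorem is exactly the identity $\langle\sigma(i),i\rangle=1$, which is precisely what the bit-complementing recipe in \eqref{eq:hhs} was engineered to guarantee.
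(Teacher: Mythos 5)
Your proof is correct, and it takes a cleaner, more structural route than the paper's. Where you differ: the paper proves part (2) by comparing the two products $h^{(q)}_{i',j_1}=\prod_s h^{(2)}_{\omega'_s,\nu_s}$ and $h^{(q)}_{i',j_2}=\prod_s h^{(2)}_{\omega'_s,\nu'_s}$ factor by factor, running a three-way case analysis on $s<k$, $s=k$, $s>k$ (with further sub-cases on the bit values of $\omega_s$ and $\nu_s$) to show the factors agree except at position $k$, where they are opposite. You instead encode everything in the character formula $\tilde{h}^{(q)}_{i,j}=(-1)^{\langle\sigma(i),j\rangle}$ and use bilinearity of the parity form modulo $2$ — legitimate because the fixed $\F_2$-basis makes addition in $\F_q$ componentwise XOR, a fact the paper also invokes — to collapse the two-column comparison into the single diagonal identity $\langle\sigma(i),i\rangle\equiv 1\pmod 2$ for $i\neq 0$. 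This eliminates the variables $\nu_s,\nu'_s$ entirely: the residual computation $(1-\iota_s)\iota_s=0$ for $s<k$, $\omega'_k\iota_k=1$, and vanishing above $k$ is the same position-by-position combinatorics as the paper's, but applied once to the pair $(\sigma(i),i)$ rather than to two matrix entries. What your approach buys is an explanation of the design of $\sigma$ in \eqref{eq:hhs} — it is exactly the condition $\langle\sigma(i),i\rangle$ odd — and a reusable sign formula (the same reduction would, for instance, shorten the proof of the paper's Lemma 5.3); what the paper's approach buys is avoidance of the character formalism, working directly with entries of $\bmH^{(2)}$. One small point to make airtight if you write this up: state explicitly that $\langle a,b\oplus c\rangle\equiv\langle a,b\rangle+\langle a,c\rangle\pmod 2$ holds componentwise because $b_s\oplus c_s\equiv b_s+c_s\pmod 2$; as written you assert bilinearity without this one-line check.
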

\begin{proof}
	We fix a basis of the vector space $\F_q$ over $\F_2$ in this proof. Write  ${h}^{(q)}_{i,j}$ for $(i,j)$-th entry of ${\bmH}^{(q)}$ for any $i,j\in \F_q$.
	
	Since $\sigma(0)=0$, the $0$-th row of $\tilde{\bmH}^{(q)}$ is the $0$-th row of ${\bmH}^{(q)}$ whose entries are all $1$. For any $j\in\F_q$,  it is not hard to check that $\tilde{h}^{(q)}_{j,0}$ is a multiplication of elements of $0$-th column of $\bmH^{(2)}$, and so it equals $1$.
			
	Now we assume $i=\omega_1\omega_2\cdots \omega_m$ is a nonzero element of $\F_q$, where $\omega_1,\omega_2,\cdots ,\omega_m\in\F_2$. Let $i'=\sigma(\omega_1\omega_2\cdots \omega_m)=\omega'_1\omega'_2\cdots \omega'_m\in\F_q$ where  $\omega'_1,\omega'_2,\cdots ,\omega'_m\in\F_2$ satisfies \eqref{eq:hhs}. By the definition of the permuted Hadamard matrix, the $i$-th row of $\tilde{\bmH}^{(q)}$ is exactly the $i'$-th row of $\bmH^{(q)}$.  For any $j_1,j_2\in\F_q$ such that $j_1+j_2=i$,
	$
	\tilde{h}^{(q)}_{i,j_1}=-\tilde{h}^{(q)}_{i,j_2}
	$ is equivalent to  $
	h^{(q)}_{i',j_1}=-h^{(q)}_{i',j_2}.
	$

	Write $j_1=\nu_1\nu_2\cdots\nu_m$ and $j_2=\nu'_1\nu'_2\cdots\nu'_m$ where $\nu_1,\cdots\nu_m,\nu'_1,\cdots\nu'_m\in\F_2$. Then
	$$
	h_{i'j_1}^{(q)}=h^{(2)}_{\omega'_1,\nu_1}h^{(2)}_{\omega'_2,\nu_2}\cdots h^{(2)}_{\omega'_m,\nu_m},
	$$
	and
	$$
	h_{i'j_2}^{(q)}=h^{(2)}_{\omega'_1,\nu'_1}h^{(2)}_{\omega'_2,\nu'_2}\cdots h^{(2)}_{\omega'_m,\nu'_m}
	$$
	The condition  is
	$$
	j_1+j_2=\nu_1\nu_2\cdots\nu_m+\nu'_1\nu'_2\cdots\nu'_m=\omega_1\omega_2\cdots \omega_m.
	$$ Notice that the addition of $\nu_1\nu_2\cdots\nu_m$ and $\nu'_1\nu'_2\cdots\nu'_m$ in the vector space $\F_q$ over $\F_2$ is a kind of the binary XOR operation. Let $k$ be the largest integer such that $\omega_k$ is nonzero. There are three cases.
	
	

	\begin{itemize}
		\item If $ s=1,\dots,k-1 $, then $ \omega_s'+\omega_s=1 $, and either $ \omega_s=0$, $\omega'_s=1 $ or
		$ \omega_s=1, \omega'_s=0 $.
		\begin{enumerate}
			\item[(1)] If $\omega_s=0$, $\omega'_s=1$, then $\nu_s=\nu'_s$. Hence,
			$$
			h^{(2)}_{\omega'_s,\nu_s}=h^{(2)}_{\omega'_s,\nu'_s}.
			$$
			
			\item[(2)]If $ \omega_s=1, \omega'_s=0 $, one obtains that			$$
			h^{(2)}_{\omega'_s,\nu_s}=1=h^{(2)}_{\omega'_s,\nu'_s}.
			$$
		\end{enumerate}
		\item
		If $ s=k $, then $ \omega_s'=\omega_s=1 $, and either  $\nu_s=1,\nu'_s=0$ or $\nu_s=0,\nu'_s=1$. We have
		\[h^{(2)}_{\omega'_s,\nu_s}=-1,\ h^{(2)}_{\omega'_s,\nu'_s}=1\]
		or
		\[h^{(2)}_{\omega'_s,\nu_s}=1,\ h^{(2)}_{\omega'_s,\nu'_s}=-1.\]
		It implies that
		$$
		h^{(2)}_{\omega'_s,\nu_s}=-h^{(2)}_{\omega'_s,\nu'_s}.
		$$
		
		\item
		If $ s=k+1,\dots,m $, then $ \omega_s'=\omega_s=0 $, and 
		$$
		h^{(2)}_{\omega'_s,\nu_s}=h^{(2)}_{\omega'_s,\nu'_s}=1.
		$$
		
	\end{itemize}
	In conclusion, $
	h^{(q)}_{i',j_1}=-h^{(q)}_{i',j_2}.
	$
\end{proof}

\section{MUBs in Square Dimensions}\label{sec:mub}

The concept of MUBs plays an important role in quantum information theory \cite{Thomas,kibler2018}.
MUBs are  uniform tight frames which have  been well studied in computational harmonic analysis \cite{2007Osustik,2003Grassmannian}.
There exist numerous ways of constructing sets of MUBs, see e.g. \cite{kibler2018,Thomas,Wocjan2004} and references therein.
This section  recalls the  method introduced in \cite{Wocjan2004}.
Following the line in \cite{Wocjan2004},  we  construct $ (q+1,q) $-net from
mutually orthogonal Latin squares obtained in subsection \ref{subset:L}. Then we define
a family of MUBs  using  $(k, q)$-nets and permuted Hadamard matrices obtained in section \ref{sec:hadamard}.

In the sequel, write $\bm{1}_q$ for the $\F_q$-indexed column vector over $\mathbb{R}$ with all entries that are equal to one, and
$ \bm{e}_{i} $ for the $i$-th column vector of $\F_q$-indexed identity matrix $ \bmI $ over $\mathbb{R}$ for any $i$ in $\F_q$.

\subsection{$ (q+1,q) $-net} \label{sect:net}

A column vector $ \m:=(m_1,\dots,m_d)^T $ of size $ d $ is called to be an \textit{incidence vector} if its entries take only the values
$0$ and $1$. Nets are  collections of incidence vectors satisfying special properties. The definition of $(q+1,q)$-net is from the design theory \cite{colbourn1996}.

\begin{defi}[$ (k,q) $-net]\cite[Definition 1]{Wocjan2004}\label{defi_net}
	Let $K$ be an indexed set with $k$ elements, and $\{\m_{b,j}\mid j\in \mathbb{F}_q\}$ be a set of incidence vectors of size $ d=q^2 $ for any $b\in K$. The collection of incidence vectors
	$$
	\left\{\{\m_{b,j}\}_{j\in\mathbb{F}_q}\mid b\in K\right\}
	$$
	is called a $(k,q)$-net,  if the following conditions hold
	\begin{enumerate}
		\item $ \langle\m_{b,i},\m_{b,j} \rangle =0 $, for any  distinct $i,j\in \F_q$ and $b\in K$;
		\item $ \langle \m_{b,i},\m_{c,j } \rangle =1 $, for any distinct $ b,c\in K$ and $i,j\in \mathbb{F}_q $.
	\end{enumerate}
\end{defi}

The relationship between mutually orthogonal Latin squares and nets has been discussed in \cite{Wocjan2004}. Let $K=\F_q\cup \{\infty\}$.
Using the square matrices $\{ \bmL^r \}_{r\in \mathbb{F}_q}$ over $\F_q$ obtained in \eqref{latin}, we construct a collection of incidence vectors to be a $(q+1,q)$-net as follows,
\begin{enumerate}
	\item
	If $ b\in \F_q$, $\m_{b,j}$ is an $\F_q^2$-indexed vector consisting of $q$-blocks $\{\bm{e}_{l^b_{u,j}}\mid u\in\F_q\}$, that is,
	\begin{equation}\label{ks-net}
		\m_{b,j}= \left(\bm{e}_{l^b_{u,j}}\right)_{u\in\F_q},\qquad j\in\F_q.
	\end{equation}
	\item
	If $b=\infty$,  $\m_{b,j}$ is an $\F_q^2$-indexed vector satisfying
	\begin{equation}\label{q-th block}
		\m_{\infty,j}=\bm{e}_j\otimes \bm{1}_q,\quad j\in \F_q.
	\end{equation}
	Similar to the case of $b\in\F_q$, $\m_{\infty,j}$ also consists of $q$-blocks, where $(m_{\infty,j})_{j}= \bm{1}_q$ and $(m_{\infty,j})_{u}=\bm{0}_q$ if $u\neq j$, for any $j\in\F_q$.
\end{enumerate}

\begin{thm}\label{thm:net} The collection of incidence vectors
	$$
	\left\{\{\m_{b,j}\}_{j\in\F_q}\mid b\in \F_q\cup\{\infty\}\right\}
	$$
	obtained in \eqref{ks-net} and \eqref{q-th block} is a $(q+1,q)$-net.
\end{thm}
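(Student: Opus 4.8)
The plan is to verify the two defining conditions of Definition \ref{defi_net} directly, exploiting the block structure of the incidence vectors together with the combinatorial properties of the Latin squares recorded in Proposition \ref{pro_latin}. Throughout I regard each vector as indexed by $\F_q\times\F_q$ and split into $q$ consecutive blocks of length $q$; for $b\in\F_q$ the $u$-th block of $\m_{b,j}$ is the standard basis vector $\bm{e}_{l^b_{u,j}}$, while $\m_{\infty,j}$ has its $j$-th block equal to $\bm{1}_q$ and all other blocks zero. Each $\m_{b,j}$ thus carries exactly $q$ ones, one per block.

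First I would establish the orthogonality axiom, that $\langle\m_{b,i},\m_{b,j}\rangle=0$ for distinct $i,j$ and fixed $b$. When $b=\infty$ this is immediate, since $\m_{\infty,i}$ and $\m_{\infty,j}$ are supported on the disjoint blocks $i$ and $j$. When $b\in\F_q$, expanding block by block gives $\langle\m_{b,i},\m_{b,j}\rangle=\sum_{u\in\F_q}\langle\bm{e}_{l^b_{u,i}},\bm{e}_{l^b_{u,j}}\rangle$, and part (1) of Proposition \ref{pro_latin} guarantees $l^b_{u,i}\neq l^b_{u,j}$ for every $u$, so each summand vanishes.

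Next I would verify the intersection axiom, that $\langle\m_{b,i},\m_{c,j}\rangle=1$ for distinct $b,c$ and arbitrary $i,j$. The mixed case, say $b\in\F_q$ and $c=\infty$, is a short computation: only the $j$-th block of $\m_{\infty,j}$ is nonzero, and there the single entry of $\bm{e}_{l^b_{j,i}}$ pairs against $\bm{1}_q$ to contribute exactly one, giving inner product $1$. The essential case is $b,c\in\F_q$ with $b\neq c$; here the block expansion yields $\sum_{u\in\F_q}\langle\bm{e}_{l^b_{u,i}},\bm{e}_{l^c_{u,j}}\rangle$, which counts the indices $u$ for which $l^b_{u,i}=l^c_{u,j}$. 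This is precisely the content of part (2) of Proposition \ref{pro_latin}, applied with $r=b$, $r'=c$ and the roles of $j_1,j_2$ played by $i,j$: it asserts a unique such $u$, so the sum equals $1$.

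The argument is thus a clean case analysis whose only substantive input is Proposition \ref{pro_latin}, whose two parts correspond exactly to the two net axioms in the all-finite subcase. The point to handle with care is the uniqueness in part (2): the existence of a matching block index forces the inner product to be at least $1$, but it is uniqueness that pins it down to exactly $1$. This exact value is the defining feature of the net and is ultimately what propagates into the mutual unbiasedness of the bases constructed from it, so I would emphasize that the net axioms are not merely orthogonality-type statements but rigid equalities.
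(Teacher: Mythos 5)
Your proof is correct and takes essentially the same route as the paper's: the same case split (both bases finite, both infinite, mixed), the same block-by-block expansion of the inner products, and the same appeal to Proposition \ref{pro_latin} --- part (1) for orthogonality within a fixed $b\in\F_q$ and part (2) for the exact intersection value in the case of two distinct finite indices. Your emphasis that the \emph{uniqueness} in part (2) is what pins the inner product to exactly $1$ (rather than at least $1$) is precisely the point the paper's computation relies on when it isolates the single matching block index $i$.
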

\begin{proof}
	
	Let $b$ be a given element of $\F_q $. For any distinct $j_1, j_2\in\F_q $, then $l^b_{u,j_1}\neq l^b_{u,j_2}$ by Proposition \ref{pro_latin}. Hence,
	$$
	\left\langle\m_{b,j_1},\m_{b,j_2} \right\rangle=\sum_{u\in\F_q}
	\left\langle\bm{e}_{l^b_{u,j_1}},\bm{e}_{l^b_{u,j_2}} \right\rangle
	=0.
	$$
	For any distinct $b$, $b'\in \F_q $ and any $j_1$, $j_2\in\F_q$,  there exists a unique $ i\in\F_q $ such that
	$ l^{b}_{i,j_1}= l^{b'}_{i,j_2}$ by Proposition \ref{pro_latin}. One obtains
\begin{align*}
\langle\m_{b,j_1},\m_{b',j_2} \rangle
=\sum_{u\neq i}
\left\langle\bm{e}_{l^b_{u,j_1}},\bm{e}_{l^{b'}_{u,j_2}} \right\rangle
+
\left\langle\bm{e}_{l^b_{i,j_1}},\bm{e}_{l^{b'}_{i,j_2}} \right\rangle
=1.		
\end{align*}
	It's easy to verify that for any distinct $j_1$, $j_2\in\F_q$,
	$$
	\langle\m_{\infty,j_1},\m_{\infty,j_2} \rangle=0.
	$$
	For any $ b$, $j_1$, $j_2\in\F_q $,  we have
$$
\langle\m_{\infty,j_1},\m_{b,j_2} \rangle
=\sum_{u\in \F_q}
\left\langle
\left(\m_{\infty,j_1}\right)_u, \bm{e}_{l^b_{u,j_2}}\right\rangle
=\left\langle\bm{1}_q,\bm{e}_{l^{b}_{j_1,j_2}} \right\rangle=1.
$$
The proof is completed.
\end{proof}

\subsection{MUBs}
Let
$$\bm{B}_0=\begin{pmatrix}
	\phi_1 & \phi_2 & \cdots & \phi_d
\end{pmatrix}
$$
and
$$
\bm{B}_1=\begin{pmatrix}
	\psi_1 & \psi_2 & \cdots & \psi_d
\end{pmatrix}
$$
denote orthonormal bases in the $d$-dimensional space. Then they are said to be \textit{mutual unbiased} if and only if
\begin{equation}\label{eq:mutual}
	|\langle \phi_i, \psi_j  \rangle | = \frac{1}{\sqrt{d}}
\end{equation}
for all $i$, $j$.
The quantity \eqref{eq:mutual} is called the mutual coherence in compressed sensing \cite{donoho2001}.
A set $ \{\bm{B}_0, \ldots, \bm{B}_s \}$ of orthonormal bases  is said to be a set of \textit{mutually unbiased bases} (MUB) if and only if every pair of bases in the set is mutually unbiased.

Let
$$
\left\{
\{\m_{b,j}\}_{j\in\F_q}\mid b\in \F_q\cup\{\infty\}
\right\}
$$
be the $(q+1,q)$-net obtained in \eqref{ks-net} and \eqref{q-th block}.
%
%
For any $\F_q$-indexed column vector $ \bm{h} $ over $ \mathbb{R}$, the embedding of $ \bm{h} $ into $ \mathbb{R}^{q^2} $ controlled by $ \m_{b,j} $ is an $\F_q^2$-indexed vector, denoted by $ \bm{h}\uparrow\m_{b,j} $, satisfying
\begin{align*}
	\bm{h}\uparrow\m_{b,j}&:=
	\left(h_u \bm{e}_{l^b_{u,j}}
	\right)_{u\in \F_q},
	\quad b,j\in \F_q,\\
	\bm{h}\uparrow\m_{\infty,j}&:=\bm{e}_j\otimes \bm{h},\quad j\in \F_q,
\end{align*}
where $ h_u $ is the $u$-th entry of $ \bm{h} $.
Using the permuted Hadamard matrix $\tilde{\bm{H}}^{(q)}$, we construct a collection of $\F^2_q\times \F_q$-indexed matrices
$$
\left\{(\bmB_b)_u\mid b\in\F_q\cup\{\infty\},u\in\F_q\right\}
$$
such that the $v$-th column of $(\bmB_b)_u$ is $\tilde{\bm{h}}_v^{(q)}\uparrow\bm{m}_{b,u}$, or equivalently
$$
\left(\bmB_b\right)_u=
\left(\tilde{\bm{h}}^{(q)}_v\uparrow\bm{m}_{b,u}\right)_{ v\in \F_q},
$$
where $ \tilde{\bm{h}}^{(q)}_v $ is the $v $-th column of $ \tilde{\bm{H}}^{(q)} $ for any $v\in\F_q$.
Then there is a collection of $\F_q^2$-indexed square matrices  $\{\bm{B}_b\mid b\in\F_q\cup\{\infty\}\}$ over $\mathbb{R}$, where $\bm{B}_b$ consists of $q$-blocks $\{(\bm{B}_b)_u\mid u\in \F_q\}$, that is
\begin{equation}\label{eq:mubss}
	\bm{B}_b=
	\frac{1}{\sqrt{q}}\left((\bm{B}_b)_u\right)
	_{u\in\F_q}.
\end{equation}


The following result follows from Theorem \ref{thm:net} and \cite[Theorem 3]{Wocjan2004}.
\begin{thm}\label{thm:mubs}
	The orthonormal  bases $ \{\bmB_b\mid b\in \F_q\cup\{\infty\}\} $ constructed in \eqref{eq:mubss}
	are $ q+1 $ mutually unbiased bases for $\mathbb{R}^{q^2}$.
	%
\end{thm}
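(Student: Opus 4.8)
The plan is to verify the two defining properties of mutually unbiased bases directly from the block structure of \eqref{eq:mubss}, so that the conclusion reduces to Theorem \ref{thm:net} together with the orthogonality and $\pm 1$ entries of the permuted Hadamard matrix $\tilde{\bmH}^{(q)}$; this is precisely the route of \cite[Theorem 3]{Wocjan2004}, and the substance is checking that its hypotheses (a $(q+1,q)$-net plus a Hadamard matrix of order $q$) are met here. First I would index the $q^2$ columns of each $\bmB_b$ by pairs $(u,v)\in\F_q\times\F_q$, the $(u,v)$-th column being $\frac{1}{\sqrt q}\,\tilde{\bm{h}}^{(q)}_v\uparrow\m_{b,u}$. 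The whole argument then rests on computing inner products of such columns and observing that the net conditions force each one to collapse to a controllable quantity.

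For orthonormality of a single $\bmB_b$ with $b\in\F_q$, I would expand
$$
\big\langle \tilde{\bm{h}}^{(q)}_{v_1}\uparrow\m_{b,u_1},\ \tilde{\bm{h}}^{(q)}_{v_2}\uparrow\m_{b,u_2}\big\rangle
=\sum_{w\in\F_q}\tilde{h}^{(q)}_{w,v_1}\tilde{h}^{(q)}_{w,v_2}\,\big\langle\bm{e}_{l^b_{w,u_1}},\bm{e}_{l^b_{w,u_2}}\big\rangle.
$$
When $u_1\neq u_2$, Proposition \ref{pro_latin}(1) gives $l^b_{w,u_1}\neq l^b_{w,u_2}$ for every $w$, so every summand vanishes; when $u_1=u_2$ the sum becomes $\langle\tilde{\bm{h}}^{(q)}_{v_1},\tilde{\bm{h}}^{(q)}_{v_2}\rangle=q\,\delta_{v_1,v_2}$, since the columns of $\tilde{\bmH}^{(q)}$ remain orthogonal of norm $\sqrt q$ (a row permutation preserves column inner products). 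Dividing by $q$ yields $\delta_{(u_1,v_1),(u_2,v_2)}$. The base $b=\infty$ is even easier: its columns are $\frac{1}{\sqrt q}\,\bm{e}_u\otimes\tilde{\bm{h}}^{(q)}_v$, and orthonormality is immediate from the tensor factorization of the inner product.

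For mutual unbiasedness between $\bmB_b$ and $\bmB_c$ with $b\neq c$, I expect the heart of the matter to be the \emph{collapse to a single term}. If $b,c\in\F_q$, Proposition \ref{pro_latin}(2) supplies a unique $w_0$ with $l^b_{w_0,u_1}=l^c_{w_0,u_2}$, so the expansion above retains only the summand $\tilde{h}^{(q)}_{w_0,v_1}\tilde{h}^{(q)}_{w_0,v_2}$; after the $1/q$ normalization this has modulus exactly $1/q$. If one index is $\infty$, a direct computation with the Kronecker embedding leaves the single product $\tilde{h}^{(q)}_{u_2,v_1}\,\tilde{h}^{(q)}_{l^b_{u_2,u_1},v_2}$, again of modulus $1/q$ after normalization. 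Since the ambient dimension equals $q^2$, the MUB constant is $1/\sqrt{q^2}=1/q$, so every cross inner product has exactly the required modulus.

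The main obstacle is not a single deep estimate but the careful bookkeeping across the $\F_q^2$-block structure, and in particular the separate treatment of the base indexed by $\infty$, whose embedding uses a Kronecker product $\bm{e}_u\otimes\cdot$ rather than the Latin-square placement $\bm{e}_{l^b_{u,j}}$; I would organize the verification into the distinct subcases of a single basis, two finite bases, and exactly one infinite base. The conceptual content is that the first net condition (disjoint supports within a basis) kills all off-diagonal terms, while the second net condition (exactly one coincidence between distinct bases) is precisely what reduces every cross inner product to one $\pm1$ Hadamard entry, delivering the constant modulus $1/\sqrt{q^2}$ that characterizes MUBs.
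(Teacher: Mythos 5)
Your proposal is correct and takes essentially the same route as the paper: the paper's entire proof is the one-line observation that the statement follows from Theorem \ref{thm:net} together with \cite[Theorem 3]{Wocjan2004}, which is exactly the reduction you identify. The only difference is that you additionally unpack the internal Wocjan--Beth computation --- the within-basis support disjointness from Proposition \ref{pro_latin}(1) plus Hadamard column orthogonality giving orthonormality, and the single-coincidence condition from Proposition \ref{pro_latin}(2) collapsing each cross inner product to a single $\pm 1$ entry of $\tilde{\bmH}^{(q)}$ of normalized modulus $1/q = 1/\sqrt{q^2}$ --- and these computations, including the mixed case involving $\bmB_\infty$, are accurate.
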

\section{Proofs of Main results}\label{sec:main}

This section proves main results in this paper.

\subsection{Proof of Theorem \ref{thm:L0}}

Using the orthonormal  bases
$
\left\{\bmB_b\mid b\in\F_q\cup \{\infty\}\right\}
$
obtained in \eqref{eq:mubss}, we construct  a dictionary by
\begin{equation}\label{eq:dic}
	\bm{D}=(\bmB_b)_{b\in\F_q\cup \{\infty\}}
	\in\mathbb{R}^{q^2 \times q^2(q+1)}.
\end{equation}
The dictionary $\bm{D}$ constructed in \eqref{eq:dic} consists of $q+1$ block  matrices indexed by $\F_q\cup \{\infty\}$.
It follows from Theorem \ref{thm:mubs} that
$$
\mu(\bm{D})=1/q.
$$
We define a collection of one sparse vectors as follows
\begin{equation}\label{eq:x^i}
	\begin{cases}
		\x^{b}=\bm{e}_{b^2}\otimes\bm{e}_b\in \mathbb{R}^{q^2},\quad b\in\F_q,\\
		\x^\infty=-\bm{e}_0\otimes\bm{e}_0\in \mathbb{R}^{q^2}.
	\end{cases}
\end{equation}
Then we obtain a $q+1$ sparse column vector
\begin{equation} \label{eq:x}
	\x =
	\left(\x^b\right)_{b\in \F_q\cup\{\infty\}}\in \mathbb{R}^{q^2(q+1)}.
\end{equation}
\begin{thm}\label{main thm}
	The $q+1$ sparse  vector  $\x$  constructed in \eqref{eq:x^i} is in the null space of $\bm{D}$ in \eqref{eq:dic}, i.e.,
	$$
	\sqrt{q} \bm{D} \x
	=\sum_{b\in \F_q\cup\{\infty\}} \sqrt{q}\bm{B}_b\x^b
	= \bm{0}_{q^2}.
	$$
\end{thm}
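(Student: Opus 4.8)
The plan is to evaluate each term $\sqrt{q}\,\bm{B}_b\x^b$ explicitly as an $\F_q^2$-indexed vector and then add the terms coordinate by coordinate. Since each $\x^b$ is (up to sign) a single standard basis vector, multiplication by $\bm{B}_b$ merely extracts one column. For $b\in\F_q$, the vector $\x^b=\bm{e}_{b^2}\otimes\bm{e}_b$ selects the column of $\bm{B}_b$ carrying block index $u=b^2$ and within-block index $v=b$; unwinding the embedding $\uparrow$ and invoking the identity $l^b_{u,b^2}=ub+b^2=a^{(q)}_{u,b}$ from \eqref{eq:L and A}, I obtain
$$
\sqrt{q}\,\bm{B}_b\x^b=\left(\tilde{h}^{(q)}_{u,b}\,\bm{e}_{a^{(q)}_{u,b}}\right)_{u\in\F_q},
$$
so the $(u,w)$-entry of this term equals $\tilde{h}^{(q)}_{u,b}$ when $w=a^{(q)}_{u,b}$ and is $0$ otherwise. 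For $b=\infty$, the embedding gives $\sqrt{q}\,\bm{B}_\infty\x^\infty=-\bm{e}_0\otimes\tilde{\bm{h}}^{(q)}_0$, and since the $0$-th column of $\tilde{\bm{H}}^{(q)}$ is the all-ones vector by the first part of Theorem \ref{a2hadamard}, this term contributes $-1$ in every coordinate of block $u=0$ and $0$ elsewhere.

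Next I would fix a coordinate $(u,w)$ and sum the contributions, splitting into the cases $u=0$ and $u\neq0$. For $u=0$, the first part of Theorem \ref{index} tells us $a^{(q)}_{0,b}=b^2$ runs over a permutation of $\F_q$, so for each $w$ exactly one index $b\in\F_q$ contributes, and its weight is $\tilde{h}^{(q)}_{0,b}=1$ by the first part of Theorem \ref{a2hadamard}; this single $+1$ is cancelled precisely by the $-1$ supplied by the $\infty$ term. For $u\neq0$ the $\infty$ term contributes nothing, and here I would use the second part of Theorem \ref{index}: the map $b\mapsto a^{(q)}_{u,b}$ has fibers consisting exactly of the pairs $\{j_1,j_2\}$ with $j_1+j_2=u$. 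In characteristic $2$ these are genuine two-element sets (there is no fixed point, since $j+j=0\neq u$), so they partition $\F_q$ into $q/2$ pairs. For each such pair the second part of Theorem \ref{a2hadamard} gives $\tilde{h}^{(q)}_{u,j_1}=-\tilde{h}^{(q)}_{u,j_2}$, whence the two contributions to the common coordinate $w=a^{(q)}_{u,j_1}$ cancel. Thus every coordinate of the total sum vanishes, which is the claim $\sqrt{q}\,\bm{D}\x=\bm{0}_{q^2}$.

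The only real obstacle is the bookkeeping of the block/tensor indexing, so that the selected columns, the embedding $\uparrow$, and the matrices $\bm{A}^{(q)}$ and $\tilde{\bm{H}}^{(q)}$ line up correctly. The conceptual heart of the argument is the observation that the \emph{positions} of the nonzero entries are governed by $\bm{A}^{(q)}$ while their \emph{signs} are governed by $\tilde{\bm{H}}^{(q)}$, and that Theorems \ref{index} and \ref{a2hadamard} are engineered so that a coincidence of positions (a pair with $j_1+j_2=u$) occurs precisely where the signs are opposite. Once this alignment is made explicit, the cancellation is immediate: the off-diagonal rows $u\neq0$ cancel pairwise, while the special row $u=0$ is handled by the all-ones column and row together with the single corrective $\infty$ term.
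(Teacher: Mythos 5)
Your proposal is correct and follows essentially the same route as the paper's proof: compute each $\sqrt{q}\,\bm{B}_b\x^b$ as $\bigl(\tilde{h}^{(q)}_{u,b}\bm{e}_{a^{(q)}_{u,b}}\bigr)_{u\in\F_q}$ via \eqref{eq:L and A}, show the $b\in\F_q$ terms sum to $\bm{e}_0\otimes\bm{1}_q$ by treating block $u=0$ with the permutation property of Theorem \ref{index} and blocks $u\neq 0$ by matching coincident positions ($j_1+j_2=u$) with opposite signs from Theorem \ref{a2hadamard}, and cancel against the $\infty$ term $-\bm{e}_0\otimes\bm{1}_q$. The only cosmetic difference is that where the paper reindexes $b\mapsto i-b$ to conclude the block sum equals its own negative, you partition $\F_q$ explicitly into $q/2$ fixed-point-free pairs and cancel pairwise, which is the same cancellation stated slightly more explicitly.
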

\begin{proof}
	The proof is divided into two cases.
	\begin{enumerate}

		\item  For $ b\in\F_q $, we have
		\begin{align*}
			\sqrt{q}\bmB_{b}\x^b
			=((\bm{B}_b)_{b^2})\bm{e}_b=
			\left(\tilde{\bm{h}}^{(q)}_v\uparrow \bm{m}_{b,b^2}\right)_{v\in\F_q}\bm{e}_b
			=\tilde{\bm{h}}^{(q)}_b\uparrow\m_{b,b^2}\\
			=
			\left(\tilde{h}^{(q)}_{i,b}\bm{e}_{l^b_{i,b^2}}\right)_{i\in\F_q}
			=
			\left(\tilde{h}^{(q)}_{i,b}\bm{e}_{a^{(q)}_{i,b}}\right)_{i\in\F_q},
		\end{align*}	
		where the last equality follows from \eqref{eq:L and A}.
		For any $i\in \mathbb{F}_q$, there are two subcases
		\begin{enumerate}
			\item If $i=0$,
			$$
			\sum_{b\in\F_q}\tilde{h}^{(q)}_{0,b}\bm{e}_{a^{(q)}_{0,b}}=\sum_{b\in\F_q}\bm{e}_{a^{(q)}_{0,b}}=\bm{1}_q,
			$$
			where the first equality follows from Theorem \ref{a2hadamard} and the second one follows from Theorem \ref{index}.
			\item If $i\neq 0$,
			\begin{align*}
			\sum_{b\in\F_q}\tilde{h}^{(q)}_{i,b}\bm{e}_{a^{(q)}_{i,b}}
			=\sum_{b\in\F_q}
			\tilde{h}^{(q)}_{i,i-b}\bm{e}_{a^{(q)}_{i,i-b}}
			=\sum_{b\in\F_q}
			-\tilde{h}^{(q)}_{i,b}\bm{e}_{a^{(q)}_{i,b}},
			\end{align*}
			where the second equality follows from Theorem \ref{a2hadamard} and Theorem \ref{index}. So
			$$
			\sum_{b\in\F_q}\tilde{h}^{(q)}_{i,b}\bm{e}_{a^{(q)}_{i,b}}=\bm{0}_{q}.
			$$
		\end{enumerate}
		Hence,
		\begin{align*}
			\sum_{b\in\F_q}\sqrt{q}\bmB_{b}\x^b
			=\sum_{b\in\F_q}
			\left(\tilde{h}^{(q)}_{i,b}\bm{e}_{a^{(q)}_{i,b}}\right)_{i\in\F_q} 
			=\bm{e}_0\otimes \bm{1}_q.
		\end{align*}
		
		\item
		For $ b=\infty $, one obtains
		\begin{align*}
			\sqrt{q}\bm{B}_\infty\x^\infty
			=&-((\bm{B}_\infty)_{0})\bm{e}_0 \\
			=&-\left(\tilde{\bm{h}}^{(q)}_v\uparrow \bm{m}_{\infty,0}\right)_{v\in\F_q}\bm{e}_0\\
			=&-\tilde{\bm{h}}^{(q)}_0\uparrow \bm{m}_{\infty,0}\\
			=&-\bm{1}_q\uparrow \bm{m}_{\infty,0}\\
			=&-\bm{e}_0\otimes \bm{1}_q.
		\end{align*}

	\end{enumerate}

	Combining two cases, we have
\begin{align*}
\sqrt{q}\bm{D}\x=\sum_{b\in \F_q\cup\{\infty\}} \sqrt{q}\bm{B}_b\x^b
=\left(\bm{e}_0\otimes\bm{1}_q-\bm{e}_0\otimes\bm{1}_q\right)
=\bm{0}_{q^2}.	
\end{align*}
\end{proof}

Theorem \ref{thm:L0} follows from Theorem \ref{main thm} directly.

\subsection{Proof of Theorem \ref{thm:L02}}
In this subsection, we focus on the field extension $\F_q\subseteq \F_{q^2}$. Then we have a quotient group $\F_{q^2}/\F_q$. For any $i\in\F_{q^2}$,  write $[i]$ for the coset of $i$ in $\F_{q^2}/\F_q$.

Obviously, $\F_q$ and $\F_{q^2}/\F_q$ are both $1$-dimensional vector spaces over $\F_q$, and they are isomorphic to each other as vector spaces over $\F_q$. Write $\xi$ for an isomorphism from $\F_q$ to $\F_{q^2}/\F_q$.

\begin{lemma}\label{lem: change of b}
Let $i$ be an element of $\F_{q^2}$ and $b_*$ the element in $\F_q$ such that $\xi(b_*)=[i]$. Then
\begin{align*}
\quad\left\{ l^b_{i,j}\mid [j]=\xi(b^2),j\in\F_{q^2}\right\}
=\left\{l^{b_*-b}_{i,j}\mid [j]=\xi((b_*-b)^2),j\in\F_{q^2}\right\},
\end{align*}
for any $b\in\F_q$.
\end{lemma}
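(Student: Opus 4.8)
The plan is to show that each of the two sets in the statement is in fact a complete coset of $\F_q$ inside $\F_{q^2}$, and that these two cosets coincide. Once both sides are identified as single cosets, the set equality collapses to the equality of their classes in the quotient group $\F_{q^2}/\F_q$, which I can settle by a short linear computation exploiting the $\F_q$-linearity of $\xi$ together with characteristic-$2$ arithmetic.

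First I would unwind the left-hand side. Reading \eqref{latin} over $\F_{q^2}$ gives $l^b_{i,j} = ib + j$, where the product $ib$ is taken in $\F_{q^2}$. For fixed $i$ and $b$, the index $j$ ranges over $\{j \mid [j] = \xi(b^2)\}$, which is a translate $j_0 + \F_q$ of $\F_q$; since $j \mapsto ib + j$ is a bijection, the $q$ choices of $j$ give $q$ distinct values $ib + j_0 + \F_q$. Hence the left-hand side is precisely the coset of $\F_q$ whose class is $[ib] + \xi(b^2)$. The identical argument with $b$ replaced by $b_* - b$ shows the right-hand side is the coset with class $[i(b_* - b)] + \xi((b_* - b)^2)$. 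So the lemma amounts to proving these two classes agree.

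Next I would compute the two classes. Because $\F_q$ is multiplicatively closed, multiplication by $b \in \F_q$ descends to the scalar action on $\F_{q^2}/\F_q$, so $[ib] = b\,[i]$. Using $[i] = \xi(b_*)$ and the $\F_q$-linearity of $\xi$, this yields $[ib] = b\,\xi(b_*) = \xi(b b_*)$, whence the left class is $\xi(b b_*) + \xi(b^2) = \xi(b b_* + b^2) = \xi\big(b(b_* + b)\big)$. Setting $b' := b_* - b = b_* + b$, the same steps give the right class $\xi(b' b_*) + \xi((b')^2) = \xi\big(b'(b_* + b')\big)$; since $b_* + b' = b$ in characteristic $2$, this simplifies to $\xi(b' b) = \xi\big(b(b_* + b)\big)$. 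Thus both classes equal $\xi\big(b(b_* + b)\big) = \xi\big(b(b_* - b)\big)$, so the two cosets coincide and the lemma follows.

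I expect the only real subtlety to be bookkeeping rather than depth: one must confirm that each side is genuinely a \emph{full} coset (so that nothing collapses when $j$ varies) and that the identity $[ib] = b\,[i]$ is exactly the scalar action carried by the $\F_q$-vector-space structure through which $\xi$ is an isomorphism. After those points are pinned down, the characteristic-$2$ cancellations $b_* + b = b_* - b$ and $b_* + b' = b$ do all the remaining work.
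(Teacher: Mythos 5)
Your proof is correct, and it is organized genuinely differently from the paper's. The paper argues by two-sided inclusion: writing $S_b$ for the left-hand set, it exhibits the explicit translation $j\mapsto j+b_*i$, checks that $[j+b_*i]=\xi(b^2)+b_*\xi(b_*)=\xi(b^2+b_*^2)=\xi((b_*-b)^2)$ while $l^b_{i,j}=bi+j=(b_*-b)i+(j+b_*i)=l^{b_*-b}_{i,j+b_*i}$, concludes $S_b\subseteq S_{b_*-b}$, and then gets the reverse inclusion for free by applying the same statement with $b$ replaced by $b_*-b$, since $b_*-(b_*-b)=b$. You instead identify each side outright as a \emph{full} additive coset of $\F_q$ in $\F_{q^2}$ --- the left with class $[ib]+\xi(b^2)=\xi\bigl(b(b_*+b)\bigr)$, the right with class $\xi\bigl(b'(b_*+b')\bigr)=\xi(b'b)=\xi\bigl(b(b_*+b)\bigr)$ for $b'=b_*+b$ --- so the set equality collapses to a single identity in $\F_{q^2}/\F_q$. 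The ingredients are identical ($\F_q$-linearity of $\xi$, the descended scalar action $[ib]=b[i]$, and the characteristic-$2$ facts $(b_*+b)^2=b_*^2+b^2$ and $b_*-b=b_*+b$), but the two packagings buy different things: your version yields the extra structural information that each set is a complete coset of size $q$, enumerated bijectively by $j$, which is precisely the distinctness fact the paper needs implicitly (via injectivity of $j\mapsto ib+j$) when this lemma is converted into an equality of sums of basis vectors in the proof of Theorem \ref{main thm2}; the paper's version instead hands you the explicit re-indexing bijection $j\mapsto j+b_*i$, which is the form in which the lemma is actually consumed there. Both arguments handle the degenerate cases $b_*=0$ and $b\in\{0,b_*\}$ without modification, and the bookkeeping points you flag (full-coset property and well-definedness of $[ib]=b[i]$) are exactly the ones to pin down, as you do.
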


\begin{proof}
Write $S_b=\left\{ l^b_{i,j}\mid [j]=\xi(b^2),j\in\F_{q^2}\right\}$ for any $b\in\F_q$.
For any given $b\in\F_q$ and any $j\in \F_{q^2}$ such that $[j]=\xi(b^2)$, we have
\begin{align*}
[j+b_*i]=[j]+[b_*i]=\xi(b^2)+b_*[i]=\xi(b^2)+b_*\xi(b_*)=\xi(b^2)+\xi(b_*^2)=\xi((b_*-b)^2).
\end{align*}
in $\F_{q^2}/\F_q$, and
\[l^b_{i,j}=bi+j=(b_*-b)i+j+b_*i=l^{b_*-b}_{i,j+b_*i}.\]
Hence $S_b\subseteq S_{b_*-b}$. One also obtains that $S_{b_*-b}\subseteq S_{b_*-(b_*-b)}=S_b$. The result follows.
\end{proof}

Recall that each element of $\F_q$ (resp. $\F_{q^2}$) can be represented by one of $\F_2^m$ (resp. $\F_2^{2m}$) with respect to some basis of the vector space $\F_q$ (resp. $\F_{q^2}$) over $\F_2$. Note that $\F_q$ is a subfield of $\F_{q^2}$. In the sequel, we fix a basis of the vector space $\F_{q^2}$ over $\F_2$ such that any element of $\F_q$ in $\F_{q^2}$ has the form
$$
\omega_1\omega_2\cdots\omega_m\underbrace{00\cdots0}_m,
$$
where $\omega_1,\omega_2,\cdots,\omega_m\in\F_2$. Then one obtains that
$$
\F_{q^2}/\F_q=\{[00\cdots0\omega_1\omega_2\cdots\omega_m]\mid \omega_1,\omega_2,\cdots,\omega_m\in\F_2\}.
$$
We define a map 
$$
\begin{array}{cclc}
\iota: & \F_q &\to &\F_{q^2}\\
~& b &\mapsto & 0\cdots0\,\omega_1\cdots\omega_m,
\end{array}
$$
where  $[0\cdots0\omega_1
\cdots\omega_m]=\xi(b)$.
%

\begin{lemma}\label{lem: permuted matrix of q^2}
Let $b$ be any given element in $\F_q\subseteq \F_{q^2}$ and $i$ an element in $\F_{q^2}$.
\begin{enumerate}
\item If $i\in\F_q$, then $\tilde{h}^{(q^2)}_{i,\iota(b)}=1$.
\item If $i\notin \F_q$, then $\tilde{h}^{(q^2)}_{i,\iota(b)}+\tilde{h}^{(q^2)}_{i,\iota(b_*-b)}=0$ where $b_*$ is the unique nonzero element in $\F_q$ such that $\xi(b_*)=[i]$.
\end{enumerate}
\end{lemma}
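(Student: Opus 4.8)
The plan is to reduce both statements to the Sylvester tensor structure of $\bm{H}^{(q^2)}$ together with the definition of the permutation $\sigma$ in \eqref{eq:hhs}, exactly as in the proof of Theorem \ref{a2hadamard}, while carefully tracking the two halves of the $2m$-bit representation of elements of $\F_{q^2}$. First I fix the chosen basis and let $W\subseteq\F_{q^2}$ be the subspace of elements whose first $m$ coordinates vanish, so that $\F_{q^2}=\F_q\oplus W$ as $\F_2$-spaces; for $i\in\F_{q^2}$ write $i=i_{\F_q}+i_W$ for the corresponding decomposition. By construction every $\iota(b)$ lies in $W$, whereas every element of $\F_q$ has its last $m$ coordinates equal to $0$. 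Writing $\tilde{h}^{(q^2)}_{i,j}=h^{(q^2)}_{\sigma(i),j}=\prod_{s=1}^{2m}h^{(2)}_{\sigma(i)_s,\,j_s}$, where a subscript $s$ denotes the $s$-th coordinate of a word, these two support conditions are exactly what will collapse most tensor factors to $1$. I also record that $\iota$ is $\F_2$-linear, being the composite of the isomorphism $\xi$ with the linear section $\F_{q^2}/\F_q\to W$; hence $\iota(b)+\iota(b_*-b)=\iota(b_*)$, and since $\iota(b_*)$ is the representative of $[i]$ inside $W$ it equals $i_W$.

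For the first item, if $i\in\F_q$ then the largest index $k$ with a nonzero coordinate satisfies $k\le m$, and the recipe \eqref{eq:hhs} shows that $\sigma(i)$ again has all of its last $m$ coordinates equal to $0$. In $\prod_{s=1}^{2m}h^{(2)}_{\sigma(i)_s,\,\iota(b)_s}$ the factors with $s\le m$ are $h^{(2)}_{\sigma(i)_s,0}=1$ because $\iota(b)_s=0$, and the factors with $s>m$ are $h^{(2)}_{0,\iota(b)_s}=1$ because $\sigma(i)_s=0$; hence $\tilde{h}^{(q^2)}_{i,\iota(b)}=1$.

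For the second item, if $i\notin\F_q$ then $i_W\neq0$, so the largest nonzero coordinate of $i$ lies among the last $m$ positions; thus $k>m$, and $i$ and $i_W$ agree in every coordinate $s>m$. By linearity $\iota(b)+\iota(b_*-b)=\iota(b_*)=i_W$, so the two columns $\iota(b),\iota(b_*-b)$ have coordinatewise sum $i_W$. Running the three-case analysis of Theorem \ref{a2hadamard} on row $i$ with these two columns gives: for $s\le m$ both columns vanish, so the ratio of factors is $1$; for $m<s<k$ the value $\sigma(i)_s=i_s+1$ forces ratio $1$ (either the two column bits coincide when $i_s=0$, or $\sigma(i)_s=0$ when $i_s=1$); for $s>k$ both factors are $1$; and at $s=k$ one has $\sigma(i)_k=1$ while the two column bits differ, producing exactly one sign flip. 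Therefore $\tilde{h}^{(q^2)}_{i,\iota(b)}=-\tilde{h}^{(q^2)}_{i,\iota(b_*-b)}$, which is the claimed vanishing sum. Equivalently, one may first verify that $\tilde{h}^{(q^2)}_{i,j}=\tilde{h}^{(q^2)}_{i_W,j}$ for every $j\in W$, since $\sigma(i)$ and $\sigma(i_W)$ share all coordinates $s>m$, and then invoke the second part of Theorem \ref{a2hadamard} for the row $i_W$ with $\iota(b)+\iota(b_*-b)=i_W$.

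The main obstacle is the bookkeeping across the two halves of the $2m$-bit words: one must check that the top nonzero position $k$ of $i$ coincides with that of its $W$-component $i_W$ (which is precisely where the hypothesis $i\notin\F_q$ enters), and that every $\iota(b)$ sits in $W$ so that the first $m$ tensor factors disappear. The single genuinely new ingredient beyond re-reading the proof of Theorem \ref{a2hadamard} is establishing the identity $\iota(b)+\iota(b_*-b)=i_W$ cleanly, i.e. the $\F_2$-linearity and well-definedness of $\iota$ combined with $\xi(b_*)=[i]$.
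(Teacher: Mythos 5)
Your proof is correct and follows essentially the same route as the paper: both rest on the Sylvester tensor factorization of $\bmH^{(q^2)}$, the permutation recipe \eqref{eq:hhs}, and the key identity that $\iota(b)$ and $\iota(b_*-b)$ sum to the representative of $[i]$ supported on the last $m$ coordinates (the paper's $j_1+j_2=k$). The only cosmetic difference is in part 2, where the paper strips the trivial first $m$ tensor factors and invokes Theorem \ref{a2hadamard} for $\tilde{\bmH}^{(q)}$ on the tails, while you either rerun its three-case analysis directly on the $2m$-bit words or apply Theorem \ref{a2hadamard} to $\tilde{\bmH}^{(q^2)}$ at the row $i_W$.
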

\begin{proof}
\begin{enumerate}
\item Since $i\in \F_q$, we have 
$$
i=\omega_1\omega_2\cdots\omega_m00\cdots0,
$$
for some $\omega_1,\omega_2,\cdots,\omega_m\in\F_2$. Let $i'=\omega'_1\omega'_2\cdots\omega'_m00\cdots0$ be the element in $\F_{q^2}$ satisfying \eqref{eq:hhs}. Then $\tilde{h}_{i,\iota(b)}^{(q^2)}$ is the element $h_{i',\iota(b)}^{(q^2)}$ in the Hadarmard matrix $\bmH^{(q^2)}$. Write
$$
\iota(b)=00\cdots0\nu_1\nu_2\cdots\nu_m,
$$
for some $\nu_1,\nu_2,\cdots,\nu_m\in\F_2$. Then
$$
\tilde{h}^{(q^2)}_{i,\iota(b)}={h}^{(q^2)}_{i',\iota(b)}=h^{(2)}_{\omega'_10}\cdots
h^{(2)}_{\omega'_m0}h^{(2)}_{0\nu_1}\cdots h^{(2)}_{0\nu_m}=1.
$$

\item Since $i\notin \F_q$, we have
$$
i=\omega_1\omega_2\cdots\omega_m\omega_{m+1}\omega_{m+2}\cdots\omega_{2m},
$$
for some
$\omega_1$, $\omega_2$, $\cdots$, $\omega_{2m}\in\F_2$
and  $\omega_{m+1}$, $\omega_{m+2}$, $\cdots$, $\omega_{2m}$ are not all zero.  Let $i'$ be the element in $\F_{q^2}$ such that the $i'$-th row of the Hadamard matrix $\bmH^{(q^2)}$ is the $i$-th row of the permuted Hadamard matrix $\tilde{\bmH}^{(q^2)}$, denoted by $i'=\omega'_1\omega'_2\cdots\omega'_m\omega'_{m+1}\omega'_{m+2}\cdots\omega'_{2m}$. Write
$$
\iota(b)=00\cdots0\nu_1\nu_2\cdots\nu_m,
$$
and
$$
\iota(b_*-b)=00\cdots0\nu'_1\nu'_2\cdots\nu'_m,
$$
for some
$
\nu_1, \nu_2,  \cdots, \nu_m, \nu_1', \nu_2', \cdots, \nu_m'\in\F_2.
$
One obtains that
\begin{align*}
[00\cdots0\nu_1\nu_2\cdots\nu_m]
+[00\cdots0\nu'_1\nu'_2\cdots\nu'_m]
=[i]=[00\cdots0\omega_{m+1}\omega_{m+2}\cdots\omega_{2m}].
\end{align*}

We denote
\begin{align*}
j_1&=\nu_1\nu_2\cdots\nu_m, \\
j_2&=\nu'_1\nu'_2\cdots\nu'_m, \\
k&=\omega_{m+1}\omega_{m+2}\cdots\omega_{2m},  \\
k'&=\omega'_{m+1}\omega'_{m+2}\cdots\omega'_{2m},
\end{align*}
in $\F_q$. It is not hard to check that $k$ and $k'$ satisfy the condition \eqref{eq:hhs} and $j_1+j_2=k$.
%
%
By Theorem \ref{a2hadamard},
\begin{align*}
&\tilde{h}^{(q^2)}_{i,\iota(b)}+\tilde{h}^{(q^2)}_{i,\iota(b_*-b)}\\
=&{h}^{(q^2)}_{i',\iota(b)}+{h}^{(q^2)}_{i',\iota(b_*-b)} \\
=&h^{(2)}_{\omega'_1,0}\cdots h^{(2)}_{\omega'_m,0}h^{(2)}_{\omega'_{m+1},\nu_1}\cdots h^{(2)}_{\omega'_{2m},\nu_m}+h^{(2)}_{\omega'_1,0}\cdots h^{(2)}_{\omega'_m,0}h^{(2)}_{\omega'_{m+1},\nu'_1}\cdots h^{(2)}_{\omega'_{2m},\nu'_m}\\
=&h^{(2)}_{\omega'_{m+1},\nu_1}\cdots h^{(2)}_{\omega'_{2m},\nu_m}
 +h^{(2)}_{\omega'_{m+1},\nu'_1}\cdots h^{(2)}_{\omega'_{2m},\nu'_m}\\
=&{h}^{(q)}_{k',j_1}+{h}^{(q)}_{k',j_2}\\
=&\tilde{h}^{(q)}_{k,j_1}+\tilde{h}^{(q)}_{k,j_2}\\
=&0. 
\end{align*}
\end{enumerate}
\end{proof}
%
%
%
%
%
%
%
%

Using the orthonormal  bases
$$
\left\{\bmB_c\mid c\in\F_{q^2}\cup \{\infty\}\right\}
$$
obtained in \eqref{eq:mubss}, we construct  a dictionary by
\begin{equation}\label{eq:dic22}
	\bm{D}=(\bmB_b)_{b\in\F_q\cup \{\infty\}}
	\in\mathbb{R}^{q^4 \times q^4(q+1)}.
\end{equation}
The dictionary $\bm{D}$ constructed in \eqref{eq:dic22} consists of $q+1$ block  matrices indexed by $\F_q\cup \{\infty\}$.
It follows from Theorem \ref{thm:mubs} that
$$
\mu(\bm{D})=1/q^2.
$$
Write
\begin{equation}\label{eq:y^i}
	\left\{
	\begin{array}{ll}
		\y^{b}=\sum_{[j]=\xi(b^2)}\bm{e}_{j}\otimes\bm{e}_{\iota(b)}\in \mathbb{R}^{q^4}, & b\in\F_q\\
		\y^\infty=-\sum_{b\in\F_q}\bm{e}_{b}\otimes\bm{e}_{0}\in \mathbb{R}^{q^4}.
	\end{array}
	\right.
\end{equation}
Then we have a
$q^2+q$ sparse column vector
\begin{equation} \label{eq:y}
	\y=\left(\y^b\right)_{b\in \F_{q}\cup\{\infty\}}\in \mathbb{R}^{q^4(q+1)}.
\end{equation}
which consists of
{$q+1$}
blocks indexed by $\F_{q}\cup \{\infty\}$.

\begin{thm}\label{main thm2}
The $q^2+q$ sparse  vector  $\y$  constructed in \eqref{eq:y^i} is in the null space of $\bm{D}$ defined in \eqref{eq:dic22}, i.e.,
$$
q \bm{D} \y
=\sum_{b\in \F_{q}\cup\{\infty\}}q\bm{B}_b\y^b
= \bm{0}_{q^4}.
$$
\end{thm}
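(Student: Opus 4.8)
The plan is to mirror the proof of Theorem \ref{main thm}, but now carried out over the larger field $\F_{q^2}$, so that the ambient space is $\mathbb{R}^{q^4}=\mathbb{R}^{\F_{q^2}^2}$ and, by \eqref{eq:mubss}, each basis $\bmB_b$ carries the normalization $1/\sqrt{q^2}=1/q$; this is exactly why we premultiply by $q$. The first step is to expand $q\bmB_b\y^b$ block by block. For $b\in\F_q$, the vector $\y^b=\sum_{[j]=\xi(b^2)}\bm{e}_j\otimes\bm{e}_{\iota(b)}$ selects, within each block $j$, the column indexed by $\iota(b)$, so that
$$
q\bmB_b\y^b=\sum_{[j]=\xi(b^2)}\tilde{\bm{h}}^{(q^2)}_{\iota(b)}\uparrow\m_{b,j}
=\left(\tilde{h}^{(q^2)}_{u,\iota(b)}\sum_{[j]=\xi(b^2)}\bm{e}_{bu+j}\right)_{u\in\F_{q^2}},
$$
using $l^b_{u,j}=bu+j$. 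Separately, since the $0$-th column of $\tilde{\bmH}^{(q^2)}$ is all ones by Theorem \ref{a2hadamard}, the infinite block gives $q\bmB_\infty\y^\infty=-\sum_{b\in\F_q}\bm{e}_b\otimes\bm{1}_{q^2}$, whose $u$-th block equals $-\bm{1}_{q^2}$ when $u\in\F_q$ and $\bm{0}$ otherwise.

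Next I would fix $u\in\F_{q^2}$ and analyze the $u$-th block of $\sum_{b\in\F_q}q\bmB_b\y^b$, splitting into the two cases dictated by Lemma \ref{lem: permuted matrix of q^2}. If $u\in\F_q$, then all signs $\tilde{h}^{(q^2)}_{u,\iota(b)}$ equal $1$, and since $bu\in\F_q$ preserves the coset, $\sum_{[j]=\xi(b^2)}\bm{e}_{bu+j}$ is simply the indicator vector of the coset $\xi(b^2)$. As $b$ ranges over $\F_q$, the Frobenius bijection (Lemma \ref{lem:diagonal of G}) makes $b^2$, hence $\xi(b^2)$, run over every coset of $\F_q$ in $\F_{q^2}$ exactly once; the block therefore sums to $\bm{1}_{q^2}$ and cancels the $-\bm{1}_{q^2}$ contributed by $\bmB_\infty$.

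The crux is the case $u\notin\F_q$, where the $\bmB_\infty$ term vanishes and I must show the $\F_q$-sum also vanishes on its own. Here I would let $b_*\in\F_q^*$ be the unique element with $\xi(b_*)=[u]$ and pair each $b$ with $b_*-b$. In characteristic two the map $b\mapsto b_*-b$ is a fixed-point-free involution of $\F_q$ (a fixed point would force $b_*=0$), so it partitions $\F_q$ into $q/2$ disjoint pairs. For each pair, Lemma \ref{lem: change of b}, applied with $i=u$, shows that the index sets $\{bu+j:[j]=\xi(b^2)\}$ and $\{(b_*-b)u+j:[j]=\xi((b_*-b)^2)\}$ coincide, so the two $\bm{e}$-sums are identical; meanwhile Lemma \ref{lem: permuted matrix of q^2} gives $\tilde{h}^{(q^2)}_{u,\iota(b_*-b)}=-\tilde{h}^{(q^2)}_{u,\iota(b)}$. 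Hence the two paired terms are negatives of one another and cancel, making the entire $u$-th block $\bm{0}$.

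I expect this last cancellation to be the main obstacle, because it requires \emph{simultaneously} matching the supports (Lemma \ref{lem: change of b}) and flipping the Hadamard signs (Lemma \ref{lem: permuted matrix of q^2}), together with the fixed-point-freeness of $b\mapsto b_*-b$, which is precisely where characteristic two enters; the coset bookkeeping in $\F_{q^2}/\F_q$ and keeping track of the tensor/block structure are the delicate parts. Once both cases are settled, summing the $u$-blocks over all $u\in\F_{q^2}$ yields $q\bmD\y=\bm{0}_{q^4}$, as desired.
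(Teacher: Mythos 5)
Your proposal is correct and takes essentially the same route as the paper's proof: the identical blockwise expansion of $q\bmB_b\y^b$ and $q\bmB_\infty\y^\infty$, the same case split on $u\in\F_q$ versus $u\notin\F_q$, and the same combination of Lemma \ref{lem:diagonal of G} (Frobenius bijection over the cosets), Lemma \ref{lem: change of b} (matching supports) and Lemma \ref{lem: permuted matrix of q^2} (sign flip). The only cosmetic difference is that where the paper reindexes the entire sum via $b\mapsto b_*-b$ to conclude $S=-S$, you cancel term-by-term in explicit pairs after checking that this involution is fixed-point-free --- the same cancellation in different packaging.
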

\begin{proof}
We consider two cases.
\begin{enumerate}

	
	\item  For $ b\in\F_q $, we have
\begin{align*}
	q\bmB_{b}\y^{b}
	=\sum_{[j]=\xi(b^2),}((\bm{B}_{b})_j)\bm{e}_{\iota(b)}
	=\sum_{[j]=\xi(b^2)}\tilde{\bm{h}}^{(q^2)}_{\iota(b)}\uparrow\m_{b,j}
	=\sum_{[j]=\xi(b^2)}
	\left(\tilde{h}^{(q^2)}_{i,\iota(b)}\bm{e}_{l^{b}_{i,j}}\right)_{i\in\F_{q^2}}.
	\end{align*}	
For any $i\in\F_{q^2}$, there are two subcases.
\begin{enumerate}
\item If $i\in\F_q$,
\begin{align*}
\sum_{b\in\F_q}\sum_{[j]=\xi(b^2)}\tilde{h}^{(q^2)}_{i,\iota(b)}\bm{e}_{bi+j}=\sum_{b\in\F_q}\sum_{[j]=\xi(b^2)}\bm{e}_{j}
&=\bm{1}_{q^2},
\end{align*}
where the first equality follows from Lemma \ref{lem: permuted matrix of q^2} and $[j+bi]=[j]$ in $\F_{q^2}/\F_q$ for any $j\in\F_{q^2}$ since $bi\in \F_q$, and the second one follows from the fact cosets determining a partition of $\F_{q^2}$ and Lemma \ref{lem:diagonal of G}.

\item If $i\notin\F_q$, then $[i]=\xi(b_*)$ for some nonzero ${b_*}\in\F_q.$ One obtains that
\begin{align*}
\sum_{b\in\F_q}\sum_{[j]=\xi(b^2)}\tilde{h}^{(q^2)}_{i,\iota(b)}\bm{e}_{l^{b}_{i,j}} 
=
\sum_{b\in\F_q}\sum_{[j]=\xi((b_*-b)^2)}\tilde{h}^{(q^2)}_{i,\iota(b_*-b)}\bm{e}_{l^{b_*-b}_{i,j}}
=
\sum_{b\in\F_q}\sum_{[j]=\xi(b^2)}-\tilde{h}^{(q^2)}_{i,\iota(b)}\bm{e}_{l^{b}_{i,j}},
%
\end{align*}
where the second equality follows from Lemma \ref{lem: change of b} and Lemma \ref{lem: permuted matrix of q^2}. So
$$
\sum_{b\in\F_q}\sum_{[j]=\xi(b^2)}\tilde{h}^{(q^2)}_{i,\iota(b)}\bm{e}_{l^{b}_{i,j}}
=\bm{0}_{q^2}.
$$

\end{enumerate}

Hence,
\begin{align*}
\sum_{b\in\F_{q}}q\bmB_{b}\y^b
=\sum_{b\in\F_q}\sum_{[j]= \xi(b)}
\left(\tilde{h}_{i,\iota(b)}^{(q^2)}\bm{e}_{l^{b}_{i,j}}\right)_{i\in\F_{q^2}}
=\sum_{b\in\F_q}\bm{e}_b\otimes \bm{1}_{q^2}.
\end{align*}

	\item
For $ b=\infty $, one obtains
\begin{align*}
	q\bm{B}_\infty\y^\infty
	&=-\sum_{b\in\F_q}((\bm{B}_\infty)_{b})\bm{e}_{0}\\
	&=-\sum_{b\in\F_q}\left(\tilde{\bm{h}}^{(q^2)}_v\uparrow \bm{m}_{\infty,b}\right)_{v\in\F_{q^2}}\bm{e}_{0}\\
	&=-\sum_{b\in\F_q}\tilde{\bm{h}}_{0}^{(q^2)}\uparrow \bm{m}_{\infty,b}\\
	&=-\sum_{b\in \F_q}\bm{e}_{b}\otimes \tilde{\bm{h}}^{(q^2)}_{0}\\
	&=-\sum_{b\in \F_q}\bm{e}_{b}\otimes \bm{1}_{q^2}.
\end{align*}

\end{enumerate}
The proof is completed.
\end{proof}

Theorem \ref{thm:L02} follows from Theorem \ref{main thm2} directly.

\section{Examples}\label{sec:exa}

This section provides three examples to illustrate the  process  showed in Figure \ref{fig1}.
In order to more easily visualize the dictionaries and sparse vectors in the null space, we map the $+1$ elements as
red squares, the $-1$ elements as blue squares and $0$ elements as gray squares. Figure \ref{fig:d1}, Figure \ref{fig:d2} 
and Figure  \ref{fig:d333} show  the dictionaries and vectors in Example \ref{exa:2}, Example \ref{exa:3} 
and Example \ref{exa:4} respectively.

\begin{Exa}\label{exa:2}

	\begin{table}[H]
		\centering
		\begin{tabular}{c|cc}
			$\cdot$&0&1 \\
			\hline
			0&0&0\\
			1&0&1
		\end{tabular}	
		\caption{Multiplication table  for $\F_2$}\label{Tab1M}
	\end{table}	
	
	\begin{table}[H]
		\centering
		\begin{tabular}{c|cc}
			$+$&0&1 \\
			\hline
			0&0&1\\
			1&1&0
		\end{tabular}
		
		\caption{Addition table for $\F_2$}\label{Tab1A}
	\end{table}	
	
Let $q=2$. It follows that
	$$
	\F_2=\{0,1\}.
	$$
The multiplication table and addition table for $\F_2$ are shown in Table \ref{Tab1M} and Table \ref{Tab1A}.
The matrix $\bm{G}^{(2)}$ defined in \eqref{matrixG} is
	\begin{equation}\label{eq:g2}
		\bm{G}^{(2)} =
		\begin{pmatrix}
			0&0\\
			0&1
		\end{pmatrix}.
	\end{equation}
	Then the Latin squares (see \eqref{latin}) are
	$$
	\bmL^0=
	\begin{pmatrix}
		0 & 1\\
		0 & 1
	\end{pmatrix},\quad
	\bmL^1=
	\begin{pmatrix}
		0 & 1\\
		1 & 0
	\end{pmatrix}.
	$$The matrix $\bm{A}^{(2)}$ defined in \eqref{A_L} (or \eqref{eq:L and A}) is
	$$
	\bmA^{(2)}=	
	\begin{pmatrix}
		0&1\\
		0&0
	\end{pmatrix}.
	$$
	Six incidence vectors of $ (3,2) $-net constructed in \eqref{ks-net} and \eqref{q-th block} are as follows:
$$
\m_{0,0}=\begin{pmatrix}
			\bm{e}_{0}^T&
			\bm{e}_{0}^T
		\end{pmatrix}^T
		=\begin{pmatrix}
			1&0&1&0
		\end{pmatrix}^T,
$$
$$
	\m_{0,1}=\begin{pmatrix}
			\bm{e}_{1}^T&
			\bm{e}_{1}^T
		\end{pmatrix}^T
		=\begin{pmatrix}
			0&1&0&1
		\end{pmatrix}^T;	
$$
$$
\m_{1,0}=\begin{pmatrix}
	\bm{e}_{0}^T&
	\bm{e}_{1}^T
\end{pmatrix}^T
=\begin{pmatrix}
	1&0&0&1
\end{pmatrix}^T,
$$
$$
\m_{1,1}=\begin{pmatrix}
	\bm{e}_{1}^T&
	\bm{e}_{0}^T
\end{pmatrix}^T
=\begin{pmatrix}
	0&1&1&0
\end{pmatrix}^T;
$$
$$
\m_{\infty,0}=\bm{e}_{0}\otimes \bm{1}_q=\begin{pmatrix}
	1&1&0&0
\end{pmatrix}^T,
$$
$$
\m_{\infty,1}=\bm{e}_{1}\otimes \bm{1}_q=\begin{pmatrix}
	0&0&1&1
\end{pmatrix}^T.
$$
In this case, the permuted Hadamard matrix $ \tilde{\bmH}^{(2)}$ equals $\bmH^{(2)}$. Two column vectors of the permuted Hadamard matrix $ \tilde{\bmH}^{(2)}$  are
	$$
	\tilde{\bm{h}}_0 =
	\begin{pmatrix}
		1\\
		1
	\end{pmatrix},
	\quad
	\tilde{\bm{h}}_1 =
	\begin{pmatrix}
		1\\
		-1
	\end{pmatrix}.
	$$
	It follows from \eqref{eq:mubss} that  MUBs  consists of
	\begin{align*}
		\bmB_0 
		=\frac{1}{\sqrt{2}}
		\left(
		\tilde{\bm{h}}_0\uparrow\m_{0,0}\ \
		\tilde{\bm{h}}_1\uparrow\m_{0,0}\ \ 
		\tilde{\bm{h}}_0\uparrow\m_{0,1}\ \ 
		\tilde{\bm{h}}_1\uparrow\m_{0,1}
		\right)
		=\frac{1}{\sqrt{2}}\begin{pmatrix}
			1 & 1 & 0 & 0\\
			0 & 0 & 1 & 1\\
			1 & -1 & 0 & 0\\
			0 & 0 & 1 & -1
		\end{pmatrix},
		\end{align*}
	\begin{align*}
		\bmB_1 
		=\frac{1}{\sqrt{2}}
		\left(\tilde{\bm{h}}_0\uparrow\m_{1,0}\ \ 
			\tilde{\bm{h}}_1\uparrow\m_{1,0}\ \ 
			\tilde{\bm{h}}_0\uparrow\m_{1,1}\ \ 
			\tilde{\bm{h}}_1\uparrow\m_{1,1}\right)
		=\frac{1}{\sqrt{2}}\begin{pmatrix}
			1 & 1 & 0 & 0\\
			0 & 0 & 1 & 1\\
			0 & 0 & 1 & -1\\
			1 & -1 & 0 & 0
		\end{pmatrix},
	\end{align*}
		and
		\begin{align*}
		\bmB_\infty 
		=\frac{1}{\sqrt{2}}
		\left(\tilde{\bm{h}}_0\uparrow\m_{\infty,0}\ 
			\tilde{\bm{h}}_1\uparrow\m_{\infty,0}\ 
			\tilde{\bm{h}}_0\uparrow\m_{\infty,1}\ 
			\tilde{\bm{h}}_1\uparrow\m_{\infty,1}\right)=\frac{1}{\sqrt{2}}\begin{pmatrix}
			1 & 1 & 0 & 0  \\
			1 & -1 & 0 & 0 \\
			0 & 0 & 1 & 1\\
			0 & 0 & 1 & -1
		\end{pmatrix}.
	\end{align*}
	We remark that the above MUBs has been constructed in \cite{Wocjan2004}.
	The dictionary $\bm{D}$ (see \eqref{eq:dic}) is given by
	\begin{align*}
		\sqrt{2}\bm{D} =\begin{pmatrix}
				1 & 1  & 0 & 0  &	1 & 1  & 0 & 0  & 	1 & 1  & 0 & 0\\
				0 & 0  & 1 & 1  & 	0 & 0  & 1 & 1  & 	1 & -1 & 0 & 0\\
				1 & -1 & 0 & 0  & 	0 & 0  & 1 & -1 & 	0 & 0  & 1 & 1\\
				0 & 0  & 1 & -1 & 	1 & -1 & 0 & 0  &  	0 & 0  & 1 & -1
			\end{pmatrix}.
	\end{align*}
	By \eqref{eq:x^i} and \eqref{eq:x}, the three sparse vector is given by
	$$\x = \begin{pmatrix}
		\x^0 \\ \x^1\\ \x^\infty
	\end{pmatrix} \in \mathbb{R}^{12}$$
	where three block vectors are
	\begin{align*}
		\x^0&=\bm{e}_{g^{(2)}_{0,0}}\otimes\bm{e}_0
		=\begin{pmatrix}
			1 & 0 & 0 & 0
		\end{pmatrix}^T,
		\\
		\x^1&=\bm{e}_{g^{(2)}_{1,1}}\otimes\bm{e}_1=\begin{pmatrix}
			0 & 0 & 0 & 1
		\end{pmatrix}^T,
		\\
		\x^\infty&=-\bm{e}_0\otimes\bm{e}_0=
		\begin{pmatrix}
			-1 & 0 & 0 & 0
		\end{pmatrix}^T.
	\end{align*}
	By Theorem \ref{main thm}, one can check such three sparse vector $\x$ is in the null space of $\bm{D}$, i.e.,
	$$
	\sqrt{2}\bm{D}\x =\bm{0}_4.
	$$
	Hence,
	$$
	\eta(\bm{D}) = 1 + \frac{1}{\mu(\bm{D})}=3,
	$$
	and
	$$
	\mu(\bm{D}) \eta(\bm{D}) = \frac{1}{2} \times 3 =  \left. \left(1 + \frac{1}{q}\right)\right|_{q=2}.
	$$

		\begin{figure}
		\centering
		\includegraphics[width=0.2\linewidth]{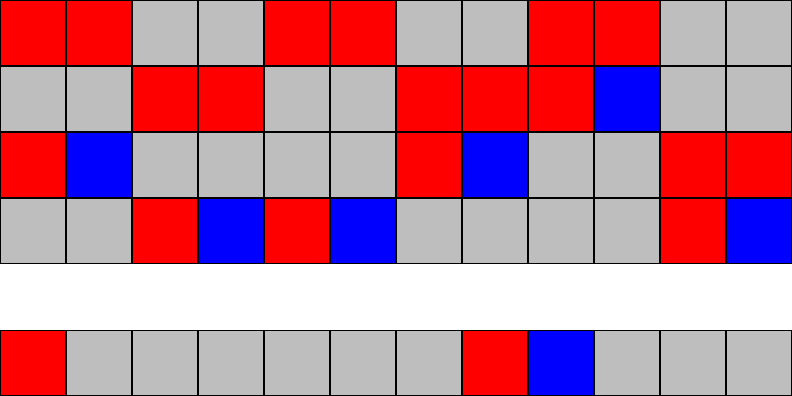}
		\caption{The dictionary  and the sparse vector in Example \ref{exa:2}. Blue: $-1$, Red: $1$, Gray: $0$}
		\label{fig:d1}
	\end{figure}

\end{Exa}

\begin{Exa}\label{exa:4}
Let $ q=2^2 $.  By the common construction, the Galois field $\F_{2^2}$ of order $4$ is $\F_2[x]/(x^2+x+1)$, and consists of the cosets of elements $0,1,x,x^2$. For convenience, we write $0,1,2$ and $3$ for the cosets of $0,1,x$ and $x^2$ respectively in this example, that is,
	$$
	\F_{2^2}=\{0,1,2,3\}.
	$$
The multiplication table and addition table for $\F_{2^2}$ are shown in Table \ref{Tab2M} and Table \ref{Tab2A}.
	\begin{table}[H]
			\centering
			\begin{tabular}{c|cccc}
				$\cdot$&0&1&2&3 \\
				\hline
				0&0&0&0&0\\
				1&0&1&2&3\\
				2&0&2&3&1\\
				3&0&3&1&2\\	
			\end{tabular}
		\caption{Multiplication  rule for $\F_{2^2}$}\label{Tab2M}
	\end{table}
	\begin{table}[H]
		\centering
		\begin{tabular}{c|cccc}
			$+$&0&1&2&3 \\
			\hline
			0&0&1&2&3\\
			1&1&0&3&2\\
			2&2&3&0&1\\
			3&3&2&1&0\\
		\end{tabular}
	\caption{Addition rule for $\F_{2^2}$}\label{Tab2A}
\end{table}
The matrix $\bm{G}^{(4)}$ defined in \eqref{matrixG} is
	$$
	\bm{G}^{(4)} 
	=\begin{pmatrix}
		0&0&0&0\\
		0&1&2&3\\
		0&2&3&1\\
		0&3&1&2\\
	\end{pmatrix}.
	$$
	Then the Latin squares (see \eqref{latin}) are
	$$
	\bmL^0=\begin{pmatrix}
		0&	1&	2&	3\\
		0&	1&	2&	3\\
		0&	1&	2&	3\\
		0&	1&	2&	3
	\end{pmatrix},
\quad
	\bmL^1=\begin{pmatrix}
		0&	1&	2&	3\\
		1&	0&	3&	2\\
		2&	3&	0&	1\\
		3&	2&	1&	0
	\end{pmatrix},
$$
$$
	\bmL^2=\begin{pmatrix}
		0&	1&	2&	3\\
		2&	3&	0&	1\\
		3&	2&	1&	0\\
		1&	0&	3&	2
	\end{pmatrix},
\quad
	\bmL^3=\begin{pmatrix}
		0&	1&	2&	3\\
		3&	2&	1&	0\\
		1&	0&	3&	2\\
		2&	3&	0&	1
	\end{pmatrix}.
	$$
	The matrix $\bm{A}^{(4)}$ defined in \eqref{A_L} (or \eqref{eq:L and A}) is
	$$
	\bmA^{(4)}=	
	\begin{pmatrix}
		0&1&3&2\\
		0&0&1&1\\
		0&3&0&3\\
		0&2&2&0
	\end{pmatrix}.
	$$
Analogue to the conversion between decimal and binary system, we write
$$
0=00,\quad 1=01,\quad 2=10,\quad 3=11,
$$
for the elements of $\F_{2^2}$ with respect to some basis of the vector space $\F_{2^2}$ over $\F_2$. Then the Hadamard matrix $ \bmH^{(4)} $ and the permuted  Hadamard matrix $ \tilde{\bmH}^{(4)} $ satisfy the conditions in Theorem \ref{a2hadamard} are
	$$
	\bmH^{(4)}= \bordermatrix{
	& 00 & 01 & 10 & 11 \cr
	00 & 1 & 1 &1 & 1 \cr
	01 & 1 & -1 &1 & -1 \cr
	10 & 1 & 1 &-1 & -1 \cr
	11 & 1 & -1 &-1 & 1 \cr
}
$$
and
$$
	\tilde{\bmH}^{(4)}=\bordermatrix{
	& 00 & 01 & 10 & 11 \cr
	00&	1 & 1 &1 & 1\cr
	01&	1 & -1 &-1 & 1\cr
	10&	1 & 1 &-1 & -1\cr
	11&	1 & -1 &1 & -1\cr
	}.
	$$
	Consequently, we can obtain the the orthonormal  bases
	by \eqref{ks-net}, \eqref{q-th block} and \eqref{eq:mubss},
	\begin{align*}
		\bmB_b=\frac{1}{2}
			\left(
			\tilde{\bm{h}}_0\uparrow\m_{b,0},
			\cdots,
			\tilde{\bm{h}}_3\uparrow\m_{b,0},\right.
			\left.\cdots,
			\tilde{\bm{h}}_0\uparrow\m_{b,3},	
			\cdots,
			\tilde{\bm{h}}_3\uparrow\m_{b,3}
			\right)
	\end{align*}
	where $ b=0,1,2,3,\infty.$
	The dictionary $\bm{D}$ (see \eqref{eq:dic}) is
	$$
	\bm{D}= \begin{pmatrix}
		\bm{B}_0 & \bm{B}_1 & \bm{B}_2 & \bm{B}_3 & \bm{B}_\infty
	\end{pmatrix} \in \mathbb{R}^{4^2 \times (4^2\times (4+1))}.
	$$
	Following \eqref{eq:x^i} and \eqref{eq:x}, five block vectors of $\x$ are defined as follows
	\begin{align*}
		\x^0&=\bm{e}_{0}\otimes\bm{e}_0
		, \\
		\x^1&=\bm{e}_{1}\otimes\bm{e}_1
		, \\
		\x^2&=\bm{e}_{3}\otimes\bm{e}_2
		, \\
		\x^3&=\bm{e}_{2}\otimes\bm{e}_3
		, \\
		\x^\infty&=-\bm{e}_0\otimes\bm{e}_0.
	\end{align*}
	By Theorem \ref{main thm} or direct calculation, we have
	$$2\bm{D}\x=\bm{0}_{16}.$$
	Hence,
	$$
	\left.\eta(\bm{D}) \right|_{q=2^2} =  1 + \left.\frac{1}{\mu(\bm{D})}\right|_{q=2^2}=5,
	$$
	and
	$$
	\mu(\bm{D}) \eta(\bm{D}) = \frac{1}{4}\times  5 =  \left.\left(1 + \frac{1}{q}\right)\right|_{q=2^2}.
	$$

	\begin{figure}
		\centering
		\includegraphics[width=0.85\linewidth]{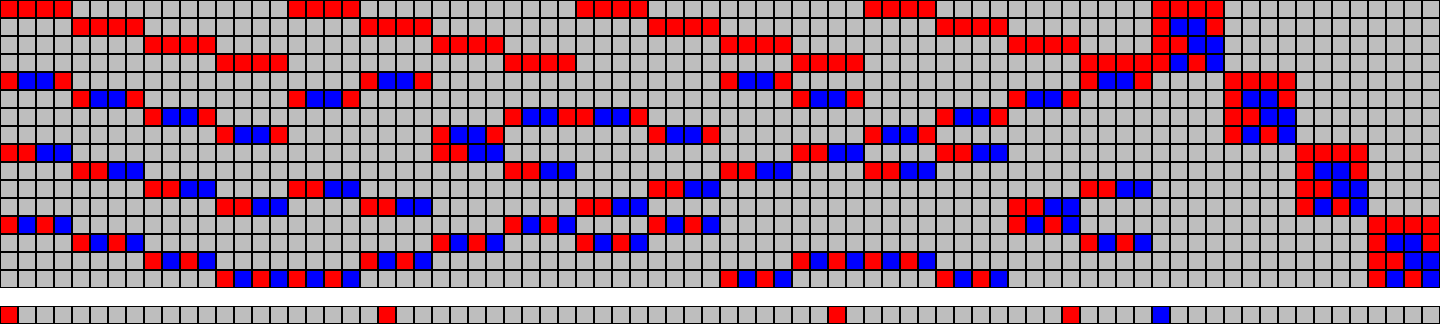}
		\caption{The dictionary  and the sparse vector in Example \ref{exa:4}. Blue: $-1$, Red: $1$, Gray: $0$}
		\label{fig:d2}
	\end{figure}
	
\end{Exa}

\begin{Exa}\label{Exa3}
We keep the notations in Example \ref{exa:2} and Example \ref{exa:4} and consider the field extension $\F_2\subseteq \F_{2^2}$ in this example. Clearly the vector space $\F_{2^2}/\F_2=\{[0],[2]\}$, where $[0]=\{0,1\}$ and $[2]=\{2,3\}$. We define an isomorphism of vector spaces over $\F_2$ as follows,
$$
\begin{array}{cclc}
\xi:  & \F_2 &\to  &\F_{2^2}/\F_2\\
~& 0   &\mapsto & [0],\\
~& 1 & \mapsto & [2].
\end{array}
$$
To our end, we choose an appropriate basis of the vector space $\F_{2^2}$ over $\F_2$ such that the elements of $\F_{2^2}$ has the following representations
$$
0=00,\quad 1=10,\quad 2=01,\quad 3=11.
$$
In this case, the Hadamard matrix $\bmH^{(4)}$ and the permuted Hadamard matrix $\tilde{\bmH}^{(4)}$ are
\[
	\bmH^{(4)}= \bordermatrix{
	& 00 & 10 & 01 & 11 \cr
	00 & 1 & 1 &1 & 1 \cr
	10 & 1 & -1 &1 & -1 \cr
	01 & 1 & 1 &-1 & -1 \cr
	11 & 1 & -1 &-1 & 1 \cr
}
\]
and
\[
	\tilde{\bmH}^{(4)}=\bordermatrix{
	& 00 & 10 & 01 & 11 \cr
	00&	1 & 1 &1 & 1\cr
	10&	1 & -1 &1 & -1\cr
	01&	1 & -1 &-1 & 1\cr
	11&	1 & 1 &-1 & -1\cr
	}.
\]
Then we define a map
$$
\begin{array}{cclc}
\iota:  & \F_2 &\to  &\F_{2^2}\\
~& 0   &\mapsto & 00,\\
~& 1 & \mapsto & 01.
\end{array}
$$
Consequently, we can obtain the  orthonormal  bases,
\begin{align*}
	\bmB_b=\frac{1}{2}
	\left(
	\tilde{\bm{h}}_0\uparrow\m_{b,0},
	\cdots,
	\tilde{\bm{h}}_3\uparrow\m_{b,0},\right.
	\left.
	\cdots,
	\tilde{\bm{h}}_0\uparrow\m_{b,3},
	\cdots,
	\tilde{\bm{h}}_3\uparrow\m_{b,3}
	\right)	
\end{align*}
	where $ b=0,1,2,3,\infty.$
	Differing from Example \ref{exa:4}, the  dictionary \(\bmD\) is
		\begin{equation}\label{exa:3}
			\bm{D}= \begin{pmatrix}
			\bm{B}_0 & \bm{B}_1 & \bm{B}_\infty
		\end{pmatrix}
	\in \mathbb{R}^{(2^2)^2 \times ((2^2)^2\times (2+1))}.
	\end{equation}
	Following \eqref{eq:y^i} and \eqref{eq:y}, three block vectors of $\y$ are defined by
	\begin{align*}
		\y^0&=(\bm{e}_{0}+\bm{e}_{1})\otimes\bm{e}_0
		, \\
		\y^1&=(\bm{e}_{2}+\bm{e}_{3})\otimes\bm{e}_2
		, \\
		\y^{\infty}&=-(\bm{e}_{0}+\bm{e}_{1})\otimes\bm{e}_0.
	\end{align*}
By Theorem \ref{main thm2},
$$2\bm{D}\y=\bm{0}_{16}.$$
We see that the dictionary defined in \eqref{exa:3} satisfies
	$$
	\eta(\bm{D}) =6 >1 + \frac{1}{\mu(\bm{D})}=5,
	$$
	and
	$$
	\mu(\bm{D}) \eta(\bm{D}) = \frac{1}{4}\times  6 =  \left.\left(1 + \frac{1}{q}\right)\right|_{q=2}.
	$$
\end{Exa}

\begin{figure}
	\centering
	\includegraphics[width=0.48\linewidth]{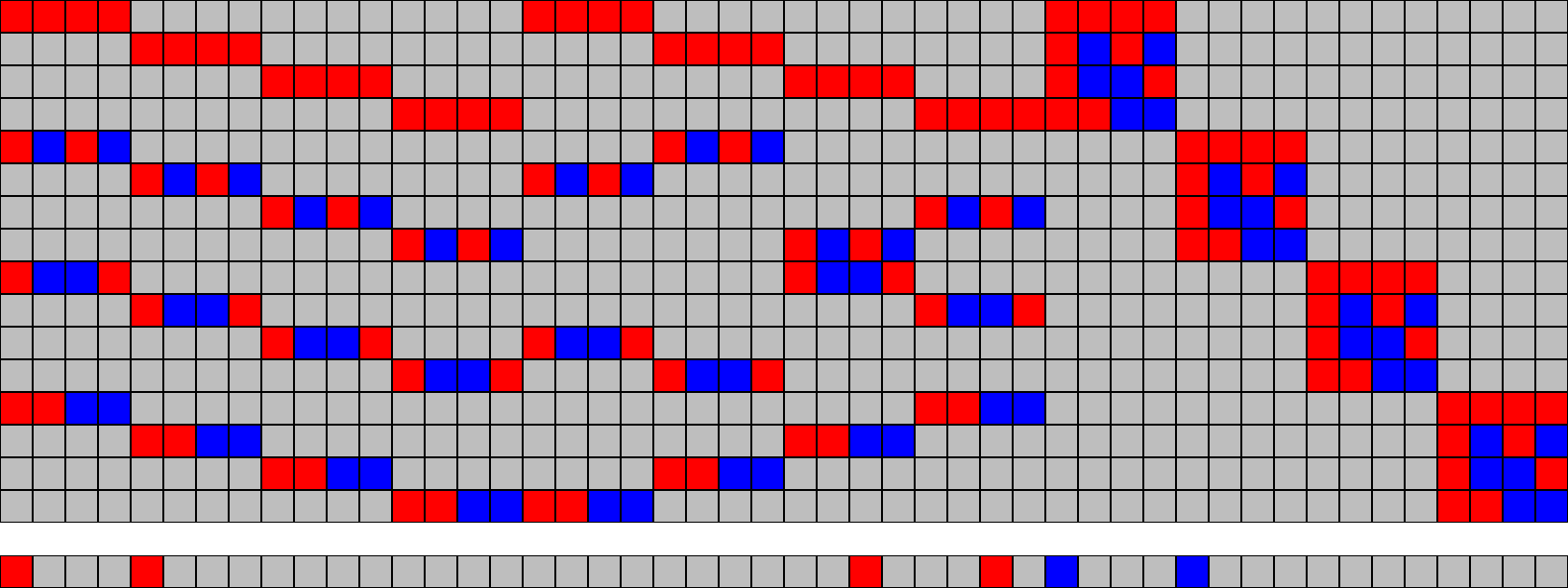}
	\caption{The dictionary  and the sparse vector in Example \ref{Exa3}. Blue: $-1$, Red: $1$, Gray: $0$}
	\label{fig:d333}
\end{figure}

\section{Conclusion}\label{sec:con}
In this paper, we  have studied the tightness of two estimates in  literatures on sparse representations.
Two classes of redundant dictionaries that are unions of several orthonormal bases were constructed by using the
mutually unbiased bases. To accurately calculate the sparks of such dictionaries,
the clear structures of sparse vectors in the null space are necessary.
Our results imply that two well-known estimates for the spark are indeed tight.
Therefore, Gribonval and Nielsen's open problem is answered positively.
The proofs of the above results are based on the techniques
in discrete mathematics and quantum information theory.
Further generalization of Theorem \ref{thm:L02} could be considered. For example,
the mutual coherence is $1/2^{q_1}$
and
the number of orthonormal bases  is $2^{q_2}+1$,
where $q_2$ is a factor of $q_1$.


%

%
%

\section*{Acknowledgment}

The authors would like to thank Tao Feng and Minyi Huang for helpful discussions regarding related material.

This work is supported by
the Zhejiang Provincial Natural Science Foundation of China under Grant No. LR19A010001,
the NSF of China under grant 12022112,12071426,11671358.

\ifCLASSOPTIONcaptionsoff
  \newpage
\fi

\bibliographystyle{IEEEtran}
\bibliography{IEEEabrv,references}

\end{document}